\theoremstyle{plain}
\newtheorem{thm}{\protect\theoremname}
\theoremstyle{plain}
\newtheorem{lem}[thm]{\protect\lemmaname}
\DeclareMathAlphabet{\mathcal}{OMS}{cmsy}{m}{n}
\providecommand{\lemmaname}{Lemma}
\providecommand{\theoremname}{Theorem}
\providecommand{\lemmaname}{Lemma}
\providecommand{\theoremname}{Theorem}
\begin{document}
\title{Tracking Brownian fluid particles\\
 in large-eddy simulations}
\author{By Zihao Guo\thanks{Zhongtai Securities Institute for Financial Studies, Shandong University,
Jinan, China, 250100, Email: \protect\protect\href{mailto:gzhsdu@mail.sdu.edu.cn}{gzhsdu@mail.sdu.edu.cn}} \ and Zhongmin Qian\thanks{Mathematical Institute, University of Oxford, Oxford OX2 6GG, UK.,
and OSCAR, Suzhou, China. Email: \protect\protect\protect\protect\protect\protect\href{mailto:qianz@maths.ox.ac.uk}{qianz@maths.ox.ac.uk}}\ }
\maketitle
\begin{abstract}
In this paper, we propose an approach for simulating wall-bounded incompressible turbulent flows by integrating the technology of random vortex method with the core principles of large-eddy simulations (LES). In particular, we employ the filtering function, interpreted as a spatial averaging operator, together with the integral representation theorem for parabolic equations, to construct a closed numerical scheme suitable for computing solutions to the Navier-Stokes equations. This framework numerically overcomes the difficulties associated with the non-locally integrable three-dimensional kernel inherent in the random vortex method, enabling efficient computation of flow fields via the Monte Carlo method. Several numerical experiments are presented for both laminar and turbulent flows in wall-bounded domains, to thereby reveal the underlying flow mechanisms near the wall boundary. The experimental results and systematic comparisons with alternative numerical approaches consistently demonstrate that the proposed method is numerically stable, possesses low theoretical complexity, and achieves acceptable computational efficiency.

\medskip{}

\emph{Key words}: large-eddy simulation, random vortex method, incompressible
fluid flow, turbulence simulation

\medskip{}

\emph{MSC classifications}: 76M35, 76M23, 60H30, 65C05, 68Q10, 
\end{abstract}

\section{Introduction}

In the present paper, we propose a novel scheme for implementing 
large-eddy simulations (LES, cf. \citep{Smagorinsky1963,Lilly1967, Deardorff1974}, also cf. the monographs
\citep{Berseli-LES, LesieurMetaisComte2005}) for wall-bounded viscous and incompressible fluid flows, based on a stochastic integral representation
of solutions to linear parabolic equations (see Appendix \ref{representation}
below) to be established in the present paper.   
LES is a numerical method extensively used for simulating turbulent flows (cf.
\citep{LesieurMetaisComte2005,Volker2004,Kumar2025}). In this article, we
introduce a natural closure to LES models by using a stochastic formulation of viscous incompressible fluid flows in terms of Brownian fluid particles. 

An incompressible viscous flow past a flat plate is modeled with a time-dependent velocity
\[u(x,t)=(u^{1}(x,t),\cdots,u^{d}(x,t)), \]
in $D:=\mathbb{R}_{+}^{d}=\{x:x\in\mathbb{R}^{d},x_{d}>0\}$
(where $d=2$ or $3$), so its boundary where $x_{d}=0$
models a solid wall. The fluid density $\rho(x,t)$ and pressure
$p(x,t)$ are other fluid flow quantities in this model. Due to incompressibility of the fluid flow,  $\rho(x,t)$
is constant, thus, without losing generality, we may assume that
$\rho(x,t)=1$ for all $x$ and $t$. The dynamics of the fluid flow
is described by the Navier-Stokes equations 
\begin{equation}
\frac{\partial}{\partial t}u^{i}+(u\cdot\nabla)u^{i}=\nu\Delta u^{i}-\frac{\partial}{\partial x_{i}}p+F^{i},\label{wbf-NS1}
\end{equation},
for $i=1,\ldots,d$, and 
\begin{equation}
\nabla\cdot u=0,\label{wbf-NS2}
\end{equation}
in $D\times[0,\infty)$, subject to the non-slip condition that $u(x,t)=0$
for $x\in\partial D$ and $t>0$. Here $\nu>0$ is the kinematic viscosity,
and $F(x,t)=(F^{1}(x,t),\cdots,F^{d}(x,t))$ stands for an external
force applying on the fluid flow. With increasing computational power,
numerical methods like finite difference, finite element method, fast
Fourier transform and etc. may be applied to calculating numerical
solutions of (\ref{wbf-NS1}, \ref{wbf-NS2}).

The Navier-Stokes equations may be formulated in various weak solution forms, which allows implementing the finite difference method of solving numerically their solutions. To this end, the mild solution approach seems useful. The simplest mild solution formulation is achieved by using the heat kernel. Let  $h(x,t,y)$ be the heat kernel of the heat operator $\nu\Delta-{\partial \over\partial t}$. For any given $t>0$, considering $f(s)=\int u(y,t-s)h(x,s,y)\textrm{d}y$. Then $f(0)=u(x,t)$, $f(t)=u_0(x)$, and
\begin{align}
    f(t)-f(0)&=\int_0^t f'(s)\textrm{d}s\notag\\
    &=\int_0^t\int_D u(y,t-s){\partial\over\partial s}h(x,s,y)\textrm{d}y\textrm{d}s-\int_0^t\int_D {\partial\over\partial s}u(y,t-s) h(x,s,y)\textrm{d}y \textrm{d}s\notag \\
    &=\int_0^t\int_D \left(\nu\Delta-{\partial\over\partial s}\right)u(y,t-s) h(x,s,y)\textrm{d}y\textrm{d}s\notag\\
    &=\int_0^t \int_D \left( u\cdot\nabla u +\nabla p - F\right)(y,t-s) h(x,s,y)\textrm{d}y\textrm{d}s,
    \end{align} 
which gives rise to the following implicit representation of the velocity
\begin{align}
u(x,t)=u_0(x)-\int_0^t \int_D \left( u\cdot\nabla u +\nabla p - F\right)(y,s) h(x,t-s,y)\textrm{d}y\textrm{d}s,
 \end{align} 
for every $t>0$ and $x\in D$. The gradient $\nabla u$ appearing in the integral on the right-hand side can be removed by using integration by parts due to the incompessibility of $u$, and 
\begin{align}
    &u^i(x,t) - u^i_0(x)=  \nonumber \\
    & \int_0^t \int_D\left(u^i(y,s) u(y,s)\cdot \nabla_y \ln{h}(x,t-s,y)-{\partial\over\partial y_i} p(y,s) - F^i(y,s)\right) h(x,t-s,y)\textrm{d}y\textrm{d}s,
    \end{align}
for $i=1,2,3$.  This mild solution setting may be used to implement finite difference 
method and calculate numerical solutions. However, this method requires still, though not explicitly involving the gradient of $u$ in the formulation,  the iteration of $\nabla u$ in order to calculate the pressure gradient $\nabla p$. This means that we need to use finite difference method to calculate $\nabla u$, which brings the risk of numerical explosion. There is another technical obstacle in this mild solution formulation, that is, the time of both singular integral kernel $\nabla \ln{h}$ and $h(x,t-s,y)$ are reversed, which implies that we need to save and use the velocity of every time during the iterative computations, causing a very long computing time. The mild solution formulation is widely used in the study of stochastic partial differential equations, it seems not suitable for directly numerical simulation of turbulent flows, further detailed explanation and comparison on this aspect shall be addressed in subsection \ref{comparison}.

To overcome this difficulty of the previous mild solution formulation one may try to follow the main idea in this formulation by absorbing the quadratic term $u\cdot \nabla u$ into the Laplacian term and make use of the fundamental solution associated with the forward parabolic operator $\nu \Delta + u\cdot \nabla - {\partial \over \partial t}$ in place of the Gaussian heat kernel $h(x,t,y)$, which is the transition probability of the diffusion with velocity $u(x,t)$. This approach can be traced back to the original work by Chorin \citep{Chorin1973} which gives the life of the random vortex method, which leads to the increasing attention paid to vortex \citep{Chang1991, Zhu2019,Zaboli2024,Cherepanov2025}. 

Let us review the main ideas in this approach for more details, cf. \citep{Chorin1973,CottetKoumoutsakos2000} and \citep{Majda and Bertozzi 2002}. In the random vortex method, the main fluid
dynamic variable is taken to be the vorticity, $\omega=\nabla\wedge u$, whose
dynamics is described by the vorticity transport equation 
\begin{equation}
\frac{\partial}{\partial t}\omega^{i}+(u\cdot\nabla)\omega^{i}=\nu\Delta\omega^{i}+(\omega\cdot\nabla)u^{i}+(\nabla\wedge F)^{i},\label{VORT-1}
\end{equation}
where $\omega^{i}=\varepsilon^{ijk}\frac{\partial}{\partial x_{j}}u^{k}$
for $i=1,2,3$. The flow velocity $u$ is recovered by solving the
Poisson equation that $\Delta u=-\nabla\wedge\omega$. 

It is appropriate to set up a convention about two-dimensional (2D)
fluid flows. By a 2D flow we mean a three-dimensional (3D) flow with velocity $u=(u^{1},u^{2},u^{3})$
where $u^{3}=0$, and $u^{1}$, $u^{2}$ depend only on coordinates $x_{1}$,
$x_{2}$. Hence, for a 2D flow, $\omega^{1}=0$, $\omega^{2}=0$
identically, $\omega^{3}=\frac{\partial}{\partial x_{1}}u^{2}-\frac{\partial}{\partial x_{2}}u^{1}$,
and $\omega$ is identified with a scalar function -- its third
component $\omega^{3}$, unless otherwise indicated. In particular
the vorticity stretching term $(\omega\cdot\nabla)u$ vanishes identically
for a 2D flow, and the vorticity equation becomes
\begin{equation}
\frac{\partial}{\partial t}\omega+(u\cdot\nabla)\omega=\nu\Delta\omega+\nabla\wedge F,
\end{equation}
where $\nabla\wedge F=\frac{\partial}{\partial x_{1}}F^{2}-\frac{\partial}{\partial x_{2}}F^{1}$
in this case. Hence, the vorticity $\omega$ can be represented in
terms of the fundamental solution $p(s,\xi,t,y)$ of the parabolic
operator $L_{-u}-\frac{\partial}{\partial t}$ (where we have introduced
a notation that $L_{-u}=\nu\Delta-u\cdot\nabla$), and the initial
vorticity $\omega_{0}$. Indeed
\begin{equation}
\omega(x,t)=\int_{\mathbb{R}^{2}}p(0,\xi,t,y)\omega_{0}(\xi)\textrm{d}\xi+\int_{0}^{t}\int_{\mathbb{R}^{2}}p(s,\xi,t,y)\nabla\wedge F(\xi,s)\textrm{d}\xi\textrm{d}s.\label{2D-v}
\end{equation}
Since $\nabla\cdot u=0$, $p(s,\xi,t,y)$ is the transition probability
density of the Brownian fluid particles $X$ with velocity $u$. By
combining with the Biot-Savart law, the previous integral representation
yields an integral representation for the velocity 
\begin{equation}
u(x,t)=\int_{\mathbb{R}^{2}}\mathbb{E}\left[K(x,X_{t}^{0,\xi})\omega_{0}(\xi)\right]\textrm{d}\xi+\int_{0}^{t}\int_{\mathbb{R}^{2}}\mathbb{E}\left[K(x,X_{t}^{s,\xi})\nabla\wedge F(\xi,s)\right]\textrm{d}\xi\textrm{d}s,\label{2D-vv}
\end{equation}
where $X^{s,\xi}$ denotes the Taylor diffusion initiated from location
$\xi$ at instance $s\geq0$, that is, the trajectories of Brownian
fluid particles with velocity $u$ determined by It\^o's stochastic
differential equations 
\begin{equation}
\textrm{d}X_{t}^{s,\xi}=u(X_{t}^{s,\xi},t)\textrm{d}t+\sqrt{2\nu}\textrm{d}B_{t},\quad X_{s}^{s,\xi}=\xi,\label{Taylor-01}
\end{equation}
where $B_{t}(t\geq s)$ is Brownian motion on a probability space.
$X$ is a diffusion process with infinitesimal generator $L_{u}$
(which is the $L^{2}$-adjoint operator of $L_{-u}$ as $\nabla\cdot u=0$). 

Two equations (\ref{2D-vv}, \ref{Taylor-01}) consist of a closed
McKean-Valsov type stochastic differential equations, which in turn
may be used in the design of numerical schemes for calculating the
flow velocity. It is also feasible to derive a random vortex formulation for flows constrained on a finite region and for flows past a solid wall. Although the random vortex method is mathematically elegant, it is effective
only for certain types of flows and is sometimes computationally expensive.

It is possible to formulate the random vortex scheme for three-dimensional viscous flows, which, however, involves an integral kernel which is not locally integrable, leading to difficulties in numerical computation of 3-dimensional flows. This poses a significant obstacle to the application of the random vortex method in the simulation of three-dimensional flows. In comparison, the random vortex method is more suitable for handling two-dimensional cases because the equations are simpler, and the integral kernel can be processed using a smoothing technique. See more details and related work in \citep{Guo2025} and \citep{Qian2022}. The significance, at least formally, the quadratic term $u\cdot \nabla u$ is hidden in the previous integral representation of the velocity in terms of the Brownian particle diffusion, which appears linear in $u$.  

If a viscous fluid flow moves along a solid wall with a high speed, then substantial vorticity may be created at the boundary which in
turn generates turbulent motion in a thin layer next to the boundary, and may lead to flow separation from the wall. The small-scale energy dissipation in turbulent motion may be described by Kolmogorov's theory, known as the K41 theory, while the global motion of boundary turbulence may be calculated by means of LES, or other numerical schemes. For simulations of turbulent flows, methods such as finite difference and finite element require extremely fine mesh resolution, which places high demands on computational power, therefore, schemes for calculating solutions of turbulent flows from their dynamical equations at separate space scales shall be advocated. LES method (cf. \citep{Piomelli2002,Smagorinsky1963}), however, uses a filtering function to separate large eddies from small eddies: the large eddies are computed numerically, while the small eddies are modeled using standardized models. On a small scale, the universality principle of turbulence implies that the statistical properties of turbulence, such as energy dissipation and the structure of velocity gradients, become independent of large-scale features of the flow and are governed by universal scaling laws \citep{Smagorinsky1963,Germano1991,Kang2023,Kim2024}. The integration of LES and machine learning discovers superior closure models and algorithms in a data-driven manner, significantly enhancing computational efficiency while ensuring accuracy \citep{Zhou2025,Zhang2025}. LES approach greatly reduces computational cost and has become the primary method for turbulence simulation today. Compared to the random vortex method, LES suffers from significant grid dependency, inherent numerical dissipation that smears small-scale vortices, and greater complexity in handling moving boundaries. This motivates us to develop a new random vortex method that focuses more on relatively large-scale flows. 

To address the challenges faced by the random vortex method in three-dimensional simulations and to incorporate the concept of large-eddy modeling, we aim to propose a random calculation method for implementing LES. The contribution of this paper lies in combining LES with the stochastic integral representation theorem to propose an explicit, forward-in-time computational method for the Navier-Stokes equations. This random LES approach can be adopted to simulate three-dimensional wall-bounded turbulence using the Monte Carlo method. According to our experimental results, our method is tolerant of mesh resolution, does not require high computational power, remains numerically stable without blow-up, and has acceptable computation time. Moreover, the method supports repeatability, allowing the simulation of fluid behavior under potential distributions. Our experimental results reveal some of the flow mechanisms at the wall boundary. Compared with the stochastic vortex method, our approach overcomes the challenge of handling three-dimensional fluid systems. Compared with traditional LES, it closes the Navier-Stokes equations without further modelling the stress tensor and imposes less stringent demands near the wall.

The remainder of the paper is organized as follows. In Section 2, the theoretical foundation of the random LES, proposed in this work, will be established including the details of the formulation of incompressible viscous fluid flows in terms of Brownian fluid particles. In Section 3, the random LES flows past a wall is established. The random LES are formulated in terms of stochastic differential equations involving the distribution of Brownian fluid particles. Various numerical
experiments based on random LES, comparisons, and analysis are reported in Section 4. Finally, in Appendix, we include the necessary results on the functional integral representation for solutions of linear parabolic systems, used in the formulation of the random LES. 

\section{Theoretical aspect of random LES}

Let us first reformulate the fluid dynamics in terms of Brownian fluid
particles, following G.Taylor \citep{Taylor1921}.
Although it has not been established rigorously yet, we will nevertheless
assume that the velocity $u(x,t)$ is bounded and smooth in $D\times[0,T)$
(for some $T>0$) and at least $C^{1}$ up to the boundary $\partial D$.
As a consequence, $p(x,t)$ shares the same regularity as that of $u(x,t)$.
The dynamics of the flow is determined by the Navier-Stokes equations
(\ref{wbf-NS1}, \ref{wbf-NS2}) together with the non-slip boundary
condition. We introduce the theoretical framework of random LES.

\subsection{Computation of the pressure gradient}

\label{cal_p}

Observe that there are two non-linear terms appearing in the Navier-Stokes
equations (\ref{wbf-NS1}), the non-linear convection term $(u\cdot\nabla)u$
and the pressure gradient $\nabla p$. In all numerical methods, one
has to deal with the pressure gradient $\nabla p$. In Direct Numerical
Simulation (DNS), one can select an appropriate basis, and project
the Navier-Stokes equations to the space of divergence-free vector
fields for eliminating the pressure gradient $\nabla p$. However,
this method works well only for special class of regions. In random
vortex method, one instead takes the vorticity $\omega=\nabla\wedge u$
to be the main fluid dynamical variable, and projects the Navier-Stokes
equation to the differential of $u$, which is the vorticity transport
equation. One can represent the vorticity $\omega$ in terms of Brownian
fluid particles $X$. This also allows for the application of finite
difference or finite element methods to both the vorticity transport
equation and the Poisson equation.

The method we are going to propose, which combines the idea of Brownian
fluid particles with an idea from LES, is based on a functional integral
representation of solutions to linear parabolic equations in terms
of distributions of Brownian fluid particles, cf. Appendix \ref{representation}.
Unlike the random vortex approach, although still utilizing the distribution
of the Brownian fluid particles, we calculate the velocity directly
instead of dealing with the vorticity equation. Hence we need to resolve
the pressure gradient $\nabla p$ first. 

Let $D=\mathbb{R}_{+}^{d}$ (where $d=2$ or $3$) whose boundary
$\partial D$ is described by the equation that $x_{d}=0$, and is
identified with $\mathbb{R}^{d-1}$. The unit normal pointing outwards
to $\partial D$ is a constant vector $\boldsymbol{n}=(0,\cdots,0,-1)$,
and the normal derivative $\frac{\partial}{\partial\boldsymbol{n}}=\left.-\frac{\partial}{\partial x_{d}}\right|_{x_{d}=0}$. 

Since $\nabla\cdot u=0$, the pressure $p(x,t)$ at every instance
$t>0$ is recovered by solving the Poisson equation 
\begin{equation}
\Delta p=-\frac{\partial u^{j}}{\partial x_{i}}\frac{\partial u^{i}}{\partial x_{j}}+\nabla\cdot F.\label{Eq-pressure}
\end{equation}
Suppose the Navier-Stokes equations continue to be valid up to the
boundary, then the pressure $p$ at the boundary $\partial D$ is
subject to the Neumann boundary condition that 
\begin{equation}
\frac{\partial}{\partial\boldsymbol{n}}p=\nu\left(\frac{\partial}{\partial x_{1}}\frac{\partial u^{1}}{\partial x_{3}}+\frac{\partial}{\partial x_{2}}\frac{\partial u^{2}}{\partial x_{3}}\right)-F^{3}\quad\textrm{ on }\partial D,\label{NBV-1}
\end{equation}
where the right-hand side is evaluated along $\partial D$. Since $\frac{\partial}{\partial\boldsymbol{n}}=\left.-\frac{\partial}{\partial x_{3}}\right|_{x_{3}=0}$,
the boundary condition may be written as 
\begin{equation}
\left.\frac{\partial p}{\partial x_{3}}\right|_{x_{3}=0}=-\nu\left(\frac{\partial}{\partial x_{1}}\frac{\partial u^{1}}{\partial x_{3}}+\frac{\partial}{\partial x_{2}}\frac{\partial u^{2}}{\partial x_{3}}\right)+F^{3}\quad\textrm{ on }x_{3}=0.\label{p-boun}
\end{equation}
Similarly for 2D flows
\begin{equation}
\left.\frac{\partial p}{\partial x_{2}}\right|_{x_{2}=0}=-\nu\frac{\partial}{\partial x_{1}}\frac{\partial u^{1}}{\partial x_{2}}+F^{2}\quad\textrm{ on }x_{2}=0.\label{p-boun-1}
\end{equation}

We shall now recall the integral representation for solutions to the
boundary problem (\ref{p-boun}) of the Poisson equation (\ref{Eq-pressure}). According to the reflection principle, the Green function for $\mathbb{R}_{+}^{d}$,
subject to the Neumann boundary condition, denoted by $H_{+}^{d}(x,y)$,
is given by the following formula
\begin{equation}
H_{d}^{+}(x,y)=\Gamma_{d}(x-y)+\Gamma_{d}(x-\bar{y})\quad\textrm{ for }x\neq y,\bar{y},\label{H-Positive-G1}
\end{equation}
where $\Gamma_{d}$ is the elementary solution to the Laplace equation
on $\mathbb{R}^{d}$:
\begin{equation}
\Gamma_{2}(x)=\frac{1}{2\pi}\ln|x|\quad\textrm{ and }\Gamma_{d}(x)=-\frac{1}{(d-2)s_{d-1}}\frac{1}{|x|^{d-2}}\textrm{ for }x\neq0,\label{Elem-01}
\end{equation}
for $d\geq3$, where $s_{d-1}$ is the area of $d-1$ dimensional
sphere, $s_{1}=2\pi$ and $s_{2}=4\pi$.

The Green formula yields the following representation (for simplicity
the time variable $t$ in $p(x,t)$ is suppressed)
\begin{align}
p(x) & =\int_{\mathbb{R}_{+}^{d}}H_{d}^{+}(x,y)\Delta p(y)\textrm{d}y-\int_{\left\{ y_{d}=0\right\} }H_{d}^{+}(x,y)\frac{\partial p}{\partial\boldsymbol{n}}(y)\textrm{d}y_{1}\cdots\textrm{d}y_{d-1}\nonumber \\
 & =\int_{\mathbb{R}_{+}^{d}}H_{d}^{+}(x,y)\Delta p(y)\textrm{d}y\nonumber \\
 & \;+\int_{\mathbb{R}^{d-1}}H_{d}^{+}(x,(y_{1},\cdots,y_{d-1},0))\left.\frac{\partial p}{\partial y_{d}}\right|_{y_{d}=0}(y_{1},\cdots,y_{d-1},0)\textrm{d}y_{1}\cdots\textrm{d}y_{d-1},\label{p-rep-01}
\end{align}
for $x\in\mathbb{R}_{+}^{d}$. Noticing that we are only interested
in the gradient of pressure $\nabla p$, which can be calculated therefore
by the following formula:
\begin{align}
\nabla p(x) & =\int_{\mathbb{R}_{+}^{d}}K_{d}^{+}(x,y)\Delta p(y)\textrm{d}y\nonumber \\
 & \;+\int_{\mathbb{R}^{d-1}}K_{d}^{+}(x,(y_{1},\cdots,y_{d-1},0))\left.\frac{\partial p}{\partial y_{d}}\right|_{y_{d}=0}(y_{1},\cdots,y_{d-1},0)\textrm{d}y_{1}\cdots\textrm{d}y_{d-1},\label{p-gradP-01}
\end{align}
where $K_{d}^{+}(x,y)=\nabla_{x}H_{d}^{+}(x,y)$, the Biot-Savart
kernel for the half space $\mathbb{R}_{+}^{d}$. 

If $d=2$, then 
\begin{equation}
K_{2}^{+}(x,y)=\frac{1}{2\pi}\left(\frac{x-y}{|x-y|^{2}}+\frac{x-\bar{y}}{|x-\bar{y}|^{2}}\right)\quad\textrm{ for }x\neq y,\bar{y},\label{K2+-01}
\end{equation}
and 
\begin{equation}
K_{2}^{+}((x_{1},x_{2}),(y_{1},0))=\frac{1}{\pi}\frac{(x_{1}-y_{1},x_{2})}{|x_{1}-y_{1}|^{2}+|x_{2}|^{2}}\quad\textrm{ for }x_{2}\neq0.\label{K2+-02}
\end{equation}

If $d=3$, then
\begin{equation}
K_{3}^{+}(x,y)=\frac{1}{4\pi}\left(\frac{x-y}{|x-y|^{3}}+\frac{x-\bar{y}}{|x-\bar{y}|^{3}}\right)\quad\textrm{ for }x\neq y,\bar{y},\label{K3+-01}
\end{equation}
and
\begin{equation}
K_{3}^{+}((x_{1},x_{2},x_{3}),(y_{1},y_{2},0))=\frac{1}{2\pi}\frac{\left(x_{1}-y_{1},x_{2}-y_{2},x_{3}\right)}{\left(|x_{1}-y_{1}|^{2}+|x_{2}-y_{2}|^{2}+|x_{3}|^{2}\right)^{\frac{3}{2}}},\label{K3+-02}
\end{equation}
for $x_{3}\neq0$. 
\begin{lem}
\label{lem1-2Dflows}(2D flows). For a 2D incompressible viscous flow
the gradient of the pressure at any instance has the following singular
integral representation
\begin{align}
\nabla p(x) & =\int_{\mathbb{R}_{+}^{2}}K_{2}^{+}(x,y)\Delta p(y)\textrm{d}y+\int_{\mathbb{R}}K_{2}^{+}(x,(y_{1},0))F^{2}(y_{1},0)\textrm{d}y_{1}\nonumber \\
 & \;+\nu\int_{\mathbb{R}}\frac{\partial K_{2}^{+}}{\partial y_{1}}(x,(y_{1},0))\frac{\partial u^{1}}{\partial y_{2}}(y_{1},0)\textrm{d}y_{1},\label{grad-p-01}
\end{align}
for $x\in\mathbb{R}_{+}^{2}$, where the last singular integral kernel
is given by
\begin{equation}
\frac{\partial K_{2}^{+}}{\partial y_{1}}((x_{1},x_{2}),(y_{1},0))=\frac{1}{\pi}\left(\frac{|x_{1}-y_{1}|^{2}-|x_{2}|^{2}}{\left(|x_{1}-y_{1}|^{2}+|x_{2}|^{2}\right)^{2}},\frac{2x_{2}(x_{1}-y_{1})}{\left(|x_{1}-y_{1}|^{2}+|x_{2}|^{2}\right)^{2}}\right)\label{grad-K-3rd-01}
\end{equation}
for $x_{2}>0$.
\end{lem}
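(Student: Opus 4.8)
The plan is to start from the general gradient representation (\ref{p-gradP-01}) specialised to $d=2$, feed in the explicit Neumann boundary data, and then transfer the tangential derivative from the velocity onto the kernel by an integration by parts along the flat boundary. Setting $d=2$ in (\ref{p-gradP-01}) gives
\begin{equation}
\nabla p(x)=\int_{\mathbb{R}_{+}^{2}}K_{2}^{+}(x,y)\Delta p(y)\textrm{d}y+\int_{\mathbb{R}}K_{2}^{+}(x,(y_{1},0))\left.\frac{\partial p}{\partial y_{2}}\right|_{y_{2}=0}(y_{1},0)\textrm{d}y_{1},
\end{equation}
so the interior term already coincides with the first term in (\ref{grad-p-01}), and it remains only to rewrite the boundary integral.

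Next I would substitute the 2D Neumann condition (\ref{p-boun-1}) into the boundary integral, which splits it into the forcing contribution
\begin{equation}
\int_{\mathbb{R}}K_{2}^{+}(x,(y_{1},0))F^{2}(y_{1},0)\textrm{d}y_{1}
\end{equation}
and the viscous contribution
\begin{equation}
-\nu\int_{\mathbb{R}}K_{2}^{+}(x,(y_{1},0))\frac{\partial}{\partial y_{1}}\frac{\partial u^{1}}{\partial y_{2}}(y_{1},0)\textrm{d}y_{1}.
\end{equation}
The forcing term is exactly the middle term of (\ref{grad-p-01}). For the viscous term I would integrate by parts in $y_1$, moving $\partial/\partial y_1$ off $\partial u^1/\partial y_2$ and onto the kernel, which produces the factor $\nu\,\partial K_2^+/\partial y_1$ appearing in (\ref{grad-p-01}). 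Crucially, since $x_2>0$ the evaluation point $x$ never meets the boundary line $\{y_2=0\}$, so $K_2^+(x,(y_1,0))$ is a smooth function of $y_1$ decaying like $|y_1|^{-1}$, and the manipulation is legitimate provided the boundary terms at $y_1=\pm\infty$ vanish.

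Finally I would verify (\ref{grad-K-3rd-01}) by differentiating the boundary kernel (\ref{K2+-02}) componentwise. Writing $a=x_1-y_1$ and $r^2=a^2+x_2^2$, and using $\partial a/\partial y_1=-1$, the first component $\tfrac{1}{\pi}a/r^2$ differentiates to $\tfrac{1}{\pi}(a^2-x_2^2)/r^4$ and the second component $\tfrac{1}{\pi}x_2/r^2$ to $\tfrac{1}{\pi}2x_2 a/r^4$, which is precisely (\ref{grad-K-3rd-01}). The one genuinely substantive point, rather than routine calculation, is the justification of the boundary integration by parts: one must argue that $u$ and its first derivatives decay sufficiently fast along the wall for the endpoint contributions to drop and for the resulting integral against the more singular kernel $\partial K_2^+/\partial y_1$ to converge. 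Under the standing assumption (made earlier in the paper) that $u$ is bounded, smooth, and $C^1$ up to the boundary with adequate decay, this is a mild technical matter; everything else follows by specialisation and direct differentiation.
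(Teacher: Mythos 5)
Your proposal is correct and follows essentially the same route as the paper: specialise the Green--Neumann representation \eqref{p-gradP-01} to $d=2$, insert the boundary condition \eqref{p-boun-1}, integrate by parts in $y_1$ to shift the tangential derivative onto the kernel, and verify \eqref{grad-K-3rd-01} by direct differentiation of \eqref{K2+-02}, with all signs and the kernel computation matching the paper's. Your added remark on justifying the integration by parts via decay of $u$ and its derivatives along the wall is a point the paper's proof passes over silently, but it does not change the argument.
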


\begin{proof}
By using the integral representation
\begin{align*}
\nabla p(x) & =\int_{\mathbb{R}_{+}^{2}}K_{2}^{+}(x,y)\Delta p(y)\textrm{d}y\\
 & \;+\int_{\mathbb{R}^{1}}K_{2}^{+}(x,(y_{1},0))\left(-\nu\frac{\partial}{\partial y_{1}}\frac{\partial u^{1}}{\partial y_{2}}+F^{2}\right)(y_{1},0)\textrm{d}y_{1}.
\end{align*}
For the integral involving the iterated derivative of $u$, we may
perform integration by parts we obtain that
\begin{align*}
\nabla p(x) & =\int_{\mathbb{R}_{+}^{2}}K_{2}^{+}(x,y)\Delta p(y)\textrm{d}y+\int_{\mathbb{R}^{1}}K_{2}^{+}(x,(y_{1},0))F^{2}(y_{1},0)\textrm{d}y_{1}\\
 & \;+\nu\int_{\mathbb{R}^{1}}\frac{\partial K_{2}^{+}}{\partial y_{1}}(x,(y_{1},0))\frac{\partial u^{1}}{\partial y_{2}}(y_{1},0)\textrm{d}y_{1},
\end{align*}
where the kernel for the third singular integral can be worked out
as
\begin{align*}
\frac{\partial K_{2}^{+}}{\partial y_{1}}((x_{1},x_{2}),(y_{1},0)) & =\frac{1}{\pi}\frac{(-1,0)}{|x_{1}-y_{1}|^{2}+|x_{2}|^{2}}+\frac{2}{\pi}\frac{\left(|x_{1}-y_{1}|^{2},x_{2}(x_{1}-y_{1})\right)}{\left(|x_{1}-y_{1}|^{2}+|x_{2}|^{2}\right)^{2}}\\
 & =\frac{1}{\pi}\left(\frac{|x_{1}-y_{1}|^{2}-|x_{2}|^{2}}{\left(|x_{1}-y_{1}|^{2}+|x_{2}|^{2}\right)^{2}},\frac{2x_{2}(x_{1}-y_{1})}{\left(|x_{1}-y_{1}|^{2}+|x_{2}|^{2}\right)^{2}}\right).
\end{align*}
\end{proof}
By utilising the same argument, we may obtain a similar representation
theorem stated as the following.
\begin{lem}
\label{lem2-3Dflows}(3D flows) For 3D flows in $\mathbb{R}_{+}^{3}$
the gradient of the pressure is represented as
\begin{align}
\nabla p(x) & =\int_{\mathbb{R}_{+}^{3}}K_{3}^{+}(x,y)\Delta p(y)\textrm{d}y+\int_{\mathbb{R}^{2}}K_{3}^{+}(x,(y_{1},y_{2},0))F^{3}(y_{1},y_{2},0)\textrm{d}y_{1}\textrm{d}y_{2}\nonumber \\
 & +\nu\int_{\mathbb{R}^{2}}\frac{\partial K_{3}^{+}}{\partial y_{1}}(x,(y_{1},y_{2},0))\frac{\partial u^{1}}{\partial y_{3}}(y_{1},y_{2},0)\textrm{d}y_{1}\textrm{d}y_{2}\nonumber \\
 & +\nu\int_{\mathbb{R}^{2}}\frac{\partial K_{3}^{+}}{\partial y_{2}}(x,(y_{1},y_{2},0))\frac{\partial u^{2}}{\partial y_{3}}(y_{1},y_{2},0)\textrm{d}y_{1}\textrm{d}y_{2},\label{grad-p3D-01}
\end{align}
where 
\begin{equation}
\frac{\partial K_{3}^{+}}{\partial y_{1}}(x,y)=\frac{\left(2|x_{1}-y_{1}|^{2}-|x_{2}-y_{2}|^{2}-|x_{3}|^{2},3(x_{2}-y_{2})(x_{1}-y_{1}),3x_{3}(x_{1}-y_{1})\right)}{2\pi\left(|x_{1}-y_{1}|^{2}+|x_{2}-y_{2}|^{2}+|x_{3}|^{2}\right)^{\frac{5}{2}}},\label{3rd-ke-1}
\end{equation}
and
\begin{equation}
\frac{\partial K_{3}^{+}}{\partial y_{2}}(x,y)=\frac{\left(3(x_{1}-y_{1})(x_{2}-y_{2}),2|x_{2}-y_{2}|^{2}-|x_{1}-y_{1}|^{2}-|x_{3}|^{2},3x_{3}(x_{2}-y_{2})\right)}{2\pi\left(|x_{1}-y_{1}|^{2}+|x_{2}-y_{2}|^{2}+|x_{3}|^{2}\right)^{\frac{5}{2}}}.\label{3rd-ke-2}
\end{equation}
for $x=(x_{1},x_{2},x_{3})$ with $x_{3}>0$ and $y=(y_{1},y_{2},0)$.
\end{lem}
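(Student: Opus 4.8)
The plan is to follow verbatim the strategy used for Lemma \ref{lem1-2Dflows}, now specialised to $d=3$. First I would start from the general gradient representation (\ref{p-gradP-01}) with $d=3$,
\begin{align*}
\nabla p(x) & =\int_{\mathbb{R}_{+}^{3}}K_{3}^{+}(x,y)\Delta p(y)\,\textrm{d}y\\
 & \;+\int_{\mathbb{R}^{2}}K_{3}^{+}(x,(y_{1},y_{2},0))\left.\frac{\partial p}{\partial y_{3}}\right|_{y_{3}=0}(y_{1},y_{2},0)\,\textrm{d}y_{1}\textrm{d}y_{2},
\end{align*}
and substitute the Neumann boundary condition (\ref{p-boun}) for the normal derivative $\partial p/\partial y_{3}$ on $\{y_{3}=0\}$. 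This splits the boundary integral into a contribution from the forcing term $F^{3}$ and two contributions carrying the iterated tangential--normal derivatives $\partial_{y_{1}}\partial_{y_{3}}u^{1}$ and $\partial_{y_{2}}\partial_{y_{3}}u^{2}$.

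Next I would integrate by parts in the tangential variable on each of the two derivative contributions: transfer $\partial/\partial y_{1}$ off $\partial u^{1}/\partial y_{3}$ and onto the kernel $K_{3}^{+}(x,(y_{1},y_{2},0))$, and likewise transfer $\partial/\partial y_{2}$ in the $u^{2}$ term. This is where the standing regularity and decay hypotheses on $u$ enter: since $x_{3}>0$ the boundary kernel $K_{3}^{+}(x,(\cdot,\cdot,0))$ and its tangential derivatives are smooth, with denominator bounded below by $x_{3}^{5}$, and they decay like $|y|^{-3}$ at infinity, so once the velocity gradients are bounded the product decays and the boundary-at-infinity terms vanish. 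Carrying out the integration by parts produces precisely the kernels $\partial K_{3}^{+}/\partial y_{1}$ and $\partial K_{3}^{+}/\partial y_{2}$ appearing in (\ref{grad-p3D-01}).

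Finally I would verify the explicit forms (\ref{3rd-ke-1}) and (\ref{3rd-ke-2}). Writing $r^{2}=|x_{1}-y_{1}|^{2}+|x_{2}-y_{2}|^{2}+|x_{3}|^{2}$ and differentiating $K_{3}^{+}=\frac{1}{2\pi}(x_{1}-y_{1},x_{2}-y_{2},x_{3})\,r^{-3}$, one uses $\partial_{y_{1}}r^{-3}=3(x_{1}-y_{1})r^{-5}$ to obtain the first component $3(x_{1}-y_{1})^{2}-r^{2}=2|x_{1}-y_{1}|^{2}-|x_{2}-y_{2}|^{2}-|x_{3}|^{2}$ over $r^{5}$, together with the cross terms $3(x_{2}-y_{2})(x_{1}-y_{1})$ and $3x_{3}(x_{1}-y_{1})$ for the remaining components; the formula for $\partial K_{3}^{+}/\partial y_{2}$ then follows by exchanging the roles of the first two coordinates.

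The main obstacle I expect is not the algebra, which is routine, but the rigorous justification of the tangential integration by parts over the \emph{unbounded} boundary $\mathbb{R}^{2}$: one must ensure that the velocity gradients decay (or are controlled) at spatial infinity along the wall so that no boundary term survives. This is only heuristically guaranteed by the blanket smoothness and boundedness assumption on $u$, and a fully rigorous treatment would require quantitative decay of $u$ and $\nabla u$ as $|x|\to\infty$.
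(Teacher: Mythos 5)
Your proposal is correct and follows essentially the same route as the paper's own proof: start from the Green-formula representation (\ref{p-gradP-01}) with $d=3$, substitute the Neumann boundary condition (\ref{p-boun}), and integrate by parts in the tangential variables to transfer $\partial/\partial y_{1}$ and $\partial/\partial y_{2}$ onto the boundary kernel, and your explicit differentiation correctly reproduces (\ref{3rd-ke-1}) and (\ref{3rd-ke-2}), which the paper merely asserts ``can be worked out explicitly.'' Your added remarks on justifying the integration by parts over the unbounded wall are a sensible refinement consistent with the paper's standing regularity assumptions, not a departure from its argument.
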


\begin{proof}
Indeed, by the Green formula, and using the boundary condition for
$p$, we obtain
\begin{align*}
\nabla p(x) & =\int_{\mathbb{R}_{+}^{3}}K_{3}^{+}(x,y)\Delta p(y)\textrm{d}y+\int_{\mathbb{R}^{2}}K_{3}^{+}(x,(y_{1},y_{2},0))\left.\frac{\partial p}{\partial y_{3}}\right|_{y_{3}=0}(y_{1},y_{2},0)\textrm{d}y_{1}\textrm{d}y_{2}\\
 & \;=\int_{\mathbb{R}_{+}^{3}}K_{3}^{+}(x,y)\Delta p(y)\textrm{d}y+\int_{\mathbb{R}^{2}}K_{3}^{+}(x,(y_{1},y_{2},0))F^{3}(y_{1},y_{2},0)\textrm{d}y_{1}\textrm{d}y_{2}\\
 & \;+\int_{\mathbb{R}^{2}}K_{3}^{+}(x,(y_{1},y_{2},0))\left.-\nu\left(\frac{\partial}{\partial y_{1}}\frac{\partial u^{1}}{\partial y_{3}}+\frac{\partial}{\partial y_{2}}\frac{\partial u^{2}}{\partial y_{3}}\right)\right|_{y_{3}=0}(y_{1},y_{2},0)\textrm{d}y_{1}\textrm{d}y_{2}.
\end{align*}
For the last integral we shall perform integration by parts, we deduce
that
\begin{align*}
\nabla p(x) & =\int_{\mathbb{R}_{+}^{3}}K_{3}^{+}(x,y)\Delta p(y)\textrm{d}y+\int_{\mathbb{R}^{2}}K_{3}^{+}(x,(y_{1},y_{2},0))F^{3}(y_{1},y_{2},0)\textrm{d}y_{1}\textrm{d}y_{2}\\
 & +\nu\int_{\mathbb{R}^{2}}\frac{\partial K_{3}^{+}}{\partial y_{1}}(x,(y_{1},y_{2},0))\frac{\partial u^{1}}{\partial y_{3}}(y_{1},y_{2},0)\textrm{d}y_{1}\textrm{d}y_{2}\\
 & +\nu\int_{\mathbb{R}^{2}}\frac{\partial K_{3}^{+}}{\partial y_{2}}(x,(y_{1},y_{2},0))\frac{\partial u^{2}}{\partial y_{3}}(y_{1},y_{2},0)\textrm{d}y_{1}\textrm{d}y_{2}.
\end{align*}
Here again those partial derivatives of $K_{3}^{+}$ can be worked
out explicitly.
\end{proof}
Finally let us mention the integral representation for viscous incompressible
flows without constraint. 
\begin{lem}
\label{lem3-whole}Suppose $p(x,t)$ is the pressure of an incompressible
fluid flow in $\mathbb{R}^{d}$ (where $d=2$ or $3$) with viscosity
constant $\nu>0$. Then
\begin{equation}
\nabla p(x)=\int_{\mathbb{R}^{d}}K_{d}(x,y)\Delta p(y)\textrm{d}y\quad\textrm{ for }x\in\mathbb{R}^{d},\label{grad-w-p}
\end{equation}
where $K_{d}(x,y)=\nabla_{x}\Gamma_{d}(x-y)$ is the Biot-Salve singular
integral kernel, given by explicitly
\begin{equation}
K_{d}(x,y)=\frac{1}{s_{d-1}}\frac{x-y}{|x-y|^{d}}\quad\textrm{ for }x\neq y,\label{BS-ke-01}
\end{equation}
and $s_{d-1}$ is the area of the $d-1$ dimensional unit sphere,
$s_{1}=2\pi$ and $s_{2}=4\pi$.
\end{lem}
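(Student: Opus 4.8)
The plan is to specialise the Green's representation argument already carried out in Lemmas~\ref{lem1-2Dflows} and \ref{lem2-3Dflows} to the unconstrained domain $\mathbb{R}^{d}$, where the boundary $\partial D$ is empty and consequently no boundary integral survives. First I would recall that $\Gamma_{d}$ is the elementary solution of the Laplace operator, so that $\Delta_{x}\Gamma_{d}(x-y)=\delta(x-y)$ in the distributional sense. Applying Green's second identity to $p$ and $\Gamma_{d}(x-\,\cdot\,)$ on a large ball $B_{R}$ (with the singularity at $y=x$ excised in the usual way, and $\boldsymbol{n}$ denoting the outward unit normal to the relevant surface) yields
\begin{equation*}
p(x)=\int_{B_{R}}\Gamma_{d}(x-y)\Delta p(y)\,\textrm{d}y+\int_{\partial B_{R}}\left(\Gamma_{d}(x-y)\frac{\partial p}{\partial\boldsymbol{n}}(y)-p(y)\frac{\partial\Gamma_{d}}{\partial\boldsymbol{n}}(x-y)\right)\textrm{d}\sigma(y),
\end{equation*}
after which I would let $R\to\infty$ and invoke the decay of $p$ and $\nabla p$ at infinity, together with the bounds $|\Gamma_{d}(z)|=O(|z|^{-(d-2)})$ (respectively $O(\log|z|)$ when $d=2$) and $|\nabla\Gamma_{d}(z)|=O(|z|^{-(d-1)})$, to make the surface integral vanish. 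This produces the potential representation $p(x)=\int_{\mathbb{R}^{d}}\Gamma_{d}(x-y)\Delta p(y)\,\textrm{d}y$.

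Next I would differentiate this identity in $x$. Because $\Gamma_{d}(x-y)$ has the locally integrable singularity of order $|x-y|^{-(d-2)}$, and its gradient $\nabla_{x}\Gamma_{d}(x-y)$ has a singularity of order $|x-y|^{-(d-1)}$ which is still locally integrable since $d-1<d$, differentiation under the integral sign is legitimate (no principal value or correction term is needed at the level of the first derivative), and I obtain
\begin{equation*}
\nabla p(x)=\int_{\mathbb{R}^{d}}\nabla_{x}\Gamma_{d}(x-y)\,\Delta p(y)\,\textrm{d}y=\int_{\mathbb{R}^{d}}K_{d}(x,y)\Delta p(y)\,\textrm{d}y,
\end{equation*}
where $K_{d}(x,y):=\nabla_{x}\Gamma_{d}(x-y)$.

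It then remains only to compute this kernel explicitly from the formulae in (\ref{Elem-01}). For $d\geq3$, differentiating $\Gamma_{d}(z)=-\frac{1}{(d-2)s_{d-1}}|z|^{-(d-2)}$ gives $\nabla\Gamma_{d}(z)=\frac{1}{s_{d-1}}\frac{z}{|z|^{d}}$, while for $d=2$ differentiating $\Gamma_{2}(z)=\frac{1}{2\pi}\ln|z|$ gives $\nabla\Gamma_{2}(z)=\frac{1}{2\pi}\frac{z}{|z|^{2}}=\frac{1}{s_{1}}\frac{z}{|z|^{2}}$. In both regimes this delivers $K_{d}(x,y)=\frac{1}{s_{d-1}}\frac{x-y}{|x-y|^{d}}$, which is exactly (\ref{BS-ke-01}).

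The only genuine subtlety is the vanishing of the surface term as $R\to\infty$, which is where the unbounded whole-space geometry enters and which requires the physical pressure to be the decaying (Newtonian-potential) solution of the Poisson equation (\ref{Eq-pressure}); every other step is the direct, boundary-free analogue of the half-space computations already performed, so I expect no further difficulty there.
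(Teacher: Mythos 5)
Your proposal is correct and is exactly the argument the paper intends: the paper states this lemma without proof, as the boundary-free analogue of the Green-formula representations in Lemmas \ref{lem1-2Dflows} and \ref{lem2-3Dflows}, and your whole-space Green's identity on $B_{R}$ with the surface term vanishing as $R\to\infty$, followed by differentiation under the integral and the explicit computation of $\nabla\Gamma_{d}$, fills in precisely that omitted argument. Your closing remark correctly identifies the one genuine hypothesis being used silently, namely that $p$ is the decaying Newtonian-potential solution of (\ref{Eq-pressure}) (a point that deserves extra care when $d=2$, where $\Gamma_{2}$ grows logarithmically), which is consistent with the paper's standing smoothness and boundedness assumptions.
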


\subsection{Representing the velocity in terms of fluid particles}

After obtaining the pressure gradient $\nabla p$ for any fixed $t$
we may update the velocity field $u(x,t)=(u^{1}(x,t),\cdots,u^{d}(x,t))$
by considering $\nabla p$ as an additional force term applying to
the flow. The Navier-Stokes equations may be written as a parabolic
equation 
\begin{equation}
\left(L_{-u}-\frac{\partial}{\partial t}\right)u+g=0\quad\textrm{ in }D,\label{par-01}
\end{equation}
subject to the non-slip condition that $u(x,t)=0$ for $x\in\partial D$
and the initial condition that $u(x,0)=u_{0}(x)$, where $L_{-u}=\nu\Delta-u\cdot\nabla$
and $g=-\nabla p+F$. Hence the velocity $u(x,t)$ may be represented
in terms of the distribution of the Brownian fluid particles. Recall
that the Brownian fluid particles with velocity $u(x,t)$ are governed
by It\^o's stochastic differential equation 
\begin{equation}
\textrm{d}X_{t}=u(X_{t},t)\textrm{d}t+\sqrt{2v}\textrm{d}B_{t},\quad X_{0}=\eta,\label{x-eta}
\end{equation}
where $B$ is $d$-dimensional Brownian motion on a probability space
$(\varOmega,\mathcal{F},\mathbb{P})$, and $\eta\in\mathbb{R}^{d}$
is the initial position of Brownian fluid particles. 

The particles defined by (\ref{x-eta}) initiated from $\eta$ is
denoted by $X^{\eta}$. Brownian particles defined by (\ref{x-eta})
form a family of diffusion with the infinitesimal generator $L_{u}=\nu\Delta+u\cdot\nabla$.
Let $p_{u}(s,x;t,y)$ be the transition probability density function
of Brownian fluid particles, that is, $p_{u}(s,x;t,y)=\mathbb{P}\left[\left.X_{t}=y\right|X_{s}=x\right]$
(for $t>s$) the probability of the fluid particles being at $y$
at time $t$ for those particles initiated from $x$ at time $s$. 
\begin{thm}
\label{thm-space}Suppose $u(x,t)$ is the velocity of an viscous
incompressible flow in $\mathbb{R}^{d}$. Then
\begin{equation}
u(\xi,t)=\int_{\mathbb{R}^{d}}p_{u}(0,\eta;t,\xi)u_{0}(\eta)\textrm{d}\eta+\int_{0}^{t}\int_{\mathbb{R}^{d}}\mathbb{E}\left[\left.g(X_{s}^{\eta},s)\right|X_{t}^{\eta}=\xi\right]p_{u}(0,\eta;t,\xi)\textrm{d}\eta\textrm{d}s,\label{rep-u-b01}
\end{equation}
for any $\xi\in\mathbb{R}^{d}$ and $t\geq0$, where $u_{0}=u(\cdot,0)$
is the initial velocity.
\end{thm}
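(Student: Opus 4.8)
The plan is to exploit the duality between the Navier--Stokes equation written in the parabolic form (\ref{par-01}) and the backward Kolmogorov equation satisfied by the transition density $p_{u}$, invoking incompressibility $\nabla\cdot u=0$ twice: once to make $L_{-u}$ the formal adjoint of $L_{u}$, and once to annihilate the convective contribution. It is cleanest to first establish the equivalent ``kernel'' form
\begin{equation}
u(\xi,t)=\int_{\mathbb{R}^{d}}p_{u}(0,\eta;t,\xi)u_{0}(\eta)\,\textrm{d}\eta+\int_{0}^{t}\int_{\mathbb{R}^{d}}g(z,s)\,p_{u}(s,z;t,\xi)\,\textrm{d}z\,\textrm{d}s,\label{plan-kernel}
\end{equation}
and then recover the stated conditional-expectation form by Chapman--Kolmogorov.

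First I would fix $(t,\xi)$ and set $G(s,z):=p_{u}(s,z;t,\xi)$ for $0\le s<t$. As the transition density of the diffusion (\ref{x-eta}) with generator $L_{u}$, the function $G$ solves, in the backward variables $(s,z)$, the Kolmogorov equation $\partial_{s}G+L_{u}^{(z)}G=0$, i.e.\ $\partial_{s}G=-\nu\Delta_{z}G-u\cdot\nabla_{z}G$, with the singular terminal behaviour $G(s,\cdot)\to\delta_{\xi}$ as $s\uparrow t$. I would then introduce the pairing $\Phi(s):=\int_{\mathbb{R}^{d}}u(z,s)G(s,z)\,\textrm{d}z$ (understood componentwise), whose endpoints already produce the desired boundary terms: $\Phi(0)=\int p_{u}(0,z;t,\xi)u_{0}(z)\,\textrm{d}z$ and, by the delta limit, $\lim_{s\uparrow t}\Phi(s)=u(\xi,t)$.

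The core computation is to differentiate $\Phi$ in $s$ and substitute the two evolution equations: $\partial_{s}u^{i}=\nu\Delta u^{i}-u\cdot\nabla u^{i}+g^{i}$ from (\ref{par-01}), and the backward equation for $G$. Two integrations by parts make the second-order terms cancel by Green's identity, $\int(\Delta u^{i})G=\int u^{i}\Delta G$. The two first-order terms combine into $-\int u\cdot\nabla(u^{i}G)\,\textrm{d}z$, which equals $-\int\nabla\cdot(u\,u^{i}G)\,\textrm{d}z=0$ precisely because $\nabla\cdot u=0$. What survives is $\Phi'(s)=\int g(z,s)G(s,z)\,\textrm{d}z$; integrating from $0$ to $t$ yields (\ref{plan-kernel}). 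To pass to the stated form, I would use the bridge identity $\mathbb{E}[g(X_{s}^{\eta},s)\mid X_{t}^{\eta}=\xi]\,p_{u}(0,\eta;t,\xi)=\int g(z,s)p_{u}(0,\eta;s,z)p_{u}(s,z;t,\xi)\,\textrm{d}z$ and integrate in $\eta$, applying the incompressibility identity $\int_{\mathbb{R}^{d}}p_{u}(0,\eta;s,z)\,\textrm{d}\eta=1$ (valid since $L_{-u}$ annihilates constants) to collapse the $\eta$-integral back to the inner integral in (\ref{plan-kernel}).

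I expect the main obstacle to be analytic rather than algebraic: rigorously justifying the endpoint limit $\lim_{s\uparrow t}\Phi(s)=u(\xi,t)$ against the singular kernel, and discarding all boundary-at-infinity contributions in the two integrations by parts and in the divergence-theorem step. These require quantitative decay of $u$, $g$ and of $G$ together with its spatial derivatives; the assumed boundedness and smoothness of $u$ (hence of $p$ and $g$), combined with Gaussian-type bounds on $p_{u}$ and $\nabla p_{u}$, should suffice, but the differentiation under the integral sign and the interchange of limits near $s=t$ are where the care is needed. Since the statement is posed on the whole space $\mathbb{R}^{d}$, no genuine boundary $\partial D$ terms arise, which removes the most delicate part of the argument.
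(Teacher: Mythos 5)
Your proposal is correct, but it takes a genuinely different route from the paper's. The paper proves this representation probabilistically (Appendix~\ref{representation}): it fixes $T>0$, runs the \emph{time-reversed} diffusion $\textrm{d}Y_{s}=-b(Y_{s},T-s)\,\textrm{d}s+\sqrt{2\nu}\,\textrm{d}B_{s}$ started at $\xi$, applies It\^o's formula to $s\mapsto\varPsi(Y_{s},T-s)$, uses the parabolic form (\ref{par-01}) to reduce the drift term to $-g$, takes expectations to get $\varPsi(\xi,T)=\mathbb{E}\left[\varPsi_{0}(Y_{T})\right]+\mathbb{E}\left[\int_{0}^{T}g(Y_{s},T-s)\,\textrm{d}s\right]$, and then invokes the duality of bridge laws under time reversal (Lemma~\ref{lem31}, which needs $\nabla\cdot b=0$) together with $p_{-b^{T}}(0,\xi;T,\eta)=p_{b}(0,\eta;T,\xi)$ to re-express everything through the forward density and the forward process conditioned on $X_{t}^{\eta}=\xi$. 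You instead argue analytically: you pair $u(\cdot,s)$ against the backward-Kolmogorov solution $G(s,z)=p_{u}(s,z;t,\xi)$, cancel the second-order terms by Green's identity and the convective terms via $\int\nabla\cdot(u\,u^{i}G)\,\textrm{d}z=0$, and then pass from your kernel form to the stated conditional-expectation form by the bridge identity plus the invariance of Lebesgue measure, $\int_{\mathbb{R}^{d}}p_{u}(0,\eta;s,z)\,\textrm{d}\eta=1$; your one-line justification of the latter is exactly right, since for divergence-free drift $p_{u}$ solves the forward equation governed by $L_{-u}$ (a fact the paper itself records just before Theorem~\ref{thmflowpastw}) and $L_{-u}$ annihilates constants. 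Each approach buys something. Yours dispenses with stochastic calculus and the time-reversal duality lemma entirely, at the cost of the analytic care you flag (the $\delta_{\xi}$ limit as $s\uparrow t$, differentiation under the integral, decay at infinity), all of which are available under the paper's standing boundedness/smoothness assumptions and Gaussian-type bounds on $p_{u}$. The paper's It\^o route absorbs the terminal condition automatically (no delta-function limit appears, since $\varPsi(\xi,T)$ emerges as the initial value of the semimartingale) and, more importantly, extends directly to the half-space setting via the stopping time $T_{\xi}$ and the reflected kernel $p_{u}(0,\eta;t,\xi)-p_{u}(0,\bar{\eta};t,\xi)$ in Theorem~\ref{thm23-5}, which is the case the paper actually needs; your integration-by-parts argument would there pick up genuine boundary terms along $x_{d}=0$ and would have to be rebuilt with the method of images.
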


Let us now consider a flow past a flat plate, i.e. an viscous incompressible
flow $u(x,t)=(u^{1}(x,t),\cdots,u^{d}(x,t))$ for $x\in\mathbb{R}_{+}^{d}$,
subject to the non-slip condition that $u(x,t)=0$ for $x=(x_{1},\cdots,x_{d})$
with $x_{d}=0$. 

The velocity $u(x,t)$ is extended to the whole space by reflection,
and therefore $u(\bar{x},t)=\overline{u(x,t)}$ for every $x\in\mathbb{R}^{d}$.
That is, $u^{i}(\bar{x},t)=u^{i}(x,t)$ ($i=1,\cdots,d-1$) and $u^{d}(\bar{x},t)=-u^{d}(x,t)$.
The extension $u(x,t)$ is then divergence-free in distribution on
$\mathbb{R}^{d}$ for each $t$, and therefore $L_{u}$ is the $L^{2}$-adjoint
of $L_{-u}$. Hence $p_{u}(s,x;t,y)$ is the fundamental solution
of the backward parabolic operator $L_{-u}+\frac{\partial}{\partial t}$.
According to the integral representation (cf. Theorem \ref{thm23-5}
below), we have the following representation theorem.
\begin{thm}
\label{thmflowpastw}For a viscous incompressible flow in $\mathbb{R}_{+}^{d}$,
the following integral representation holds:
\begin{align}
u(\xi,t) & =\int_{\mathbb{R}_{+}^{d}}\left(p_{u}(0,\eta;t,\xi)-p_{u}(0,\bar{\eta};t,\xi)\right)u_{0}(\eta)\textrm{d}\eta\nonumber \\
 & +\int_{0}^{t}\int_{\mathbb{R}_{+}^{d}}\mathbb{E}\left[\left.1_{\{t-s<\zeta(X^{\eta}) \}}g(X_{s}^{\eta},s)\right|X_{t}^{\eta}=\xi\right]p_{u}(0,\eta;t,\xi)\textrm{d}\eta\textrm{d}s,\label{u-int1}
\end{align}
for $\xi\in\mathbb{R}_{+}^{d}$ and $t>0$, and $u(\bar{\xi},t)=\overline{u(\xi,t)}$
for $\xi\notin\mathbb{R}_{+}^{d}$. Here $\zeta(w)=\inf\{s>0:w(s)\notin\mathbb{R}_{+}^{d}\}$
for every continuous path $w$ in $\mathbb{R}^{d}$. 
\end{thm}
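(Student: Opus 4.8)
The plan is to regard the velocity field $u$ on $\mathbb{R}_{+}^{d}$ as the solution of the linear parabolic system
\[
\left(L_{-u}-\frac{\partial}{\partial t}\right)u+g=0\quad\textrm{in }\mathbb{R}_{+}^{d},
\]
subject to the homogeneous Dirichlet (non-slip) condition $u|_{\partial D}=0$ and initial datum $u_{0}$, exactly as in (\ref{par-01}) but now posed on the half-space. Here $g=-\nabla p+F$, and the convection velocity is frozen and treated as a given divergence-free coefficient, so the transport operator $u\cdot\nabla$ acts componentwise and each scalar component $u^{i}$ solves the same scalar Dirichlet problem with generator $L_{-u}$. I would then invoke the functional integral representation of the Appendix (Theorem \ref{thm23-5}) in its boundary-value form: the solution of such a scalar parabolic Dirichlet problem is represented through the associated diffusion killed upon first exit from the domain, the initial datum being integrated against the killed transition density and the source $g$ against the killed bridge, with the killing recorded by the exit-time functional $\zeta$. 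This is the origin of both the subtracted kernel in the first line of (\ref{u-int1}) and the indicator $1_{\{t-s<\zeta(X^{\eta})\}}$ in the second.

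The key computation is to identify the half-space (killed) transition density by the method of images. Because $u$ has been extended to all of $\mathbb{R}^{d}$ by the odd/even reflection $u(\bar{x},t)=\overline{u(x,t)}$, the defining equation (\ref{x-eta}) is equivariant under the reflection $R:x\mapsto\bar{x}$: if $X^{x}$ solves (\ref{x-eta}) then $\overline{X^{x}}$ has the law of $X^{\bar{x}}$, since $\overline{u(\bar{y},t)}=u(y,t)$ and the reflected Brownian motion is again a Brownian motion. Equivalently, the scalar operator $L_{-u}=\nu\Delta-u\cdot\nabla$ commutes with the composition operator $f\mapsto f\circ R$. Hence the whole-space kernel enjoys the symmetry
\[
p_{u}(s,\bar{x};t,\bar{y})=p_{u}(s,x;t,y),
\]
so that $p_{u}(0,\bar{\eta};t,\xi)=p_{u}(0,\eta;t,\bar{\xi})$. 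The image kernel $p_{u}(0,\eta;t,\xi)-p_{u}(0,\bar{\eta};t,\xi)$ therefore vanishes whenever $\xi\in\partial D$ (where $\bar{\xi}=\xi$), so it is precisely the killed transition density of $X$ on $\mathbb{R}_{+}^{d}$ with absorption on $x_{d}=0$. Substituting this into the representation of the previous paragraph yields the subtracted kernel acting on $u_{0}$ in the first line of (\ref{u-int1}) and, after the analogous treatment of the source term, the conditional expectation weighted by $1_{\{t-s<\zeta(X^{\eta})\}}$ in the second line; the extension $u(\bar{\xi},t)=\overline{u(\xi,t)}$ to $\xi\notin\mathbb{R}_{+}^{d}$ is then just the defining reflection.

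I expect the main obstacle to be the rigorous handling of the source term together with the boundary regularity. Two points deserve care. First, the odd/even extension of $u$ is only Lipschitz across $\partial D$ for the tangential components, whose normal derivative jumps (carrying the wall shear), so a naive application of the whole-space representation Theorem \ref{thm-space} to the extended field would generate spurious surface sources on $x_{d}=0$ and, in fact, a reflecting rather than absorbing kernel for the tangential components; the Dirichlet formulation adopted here, in which killing is imposed directly, is what avoids this and produces the same subtracted kernel for every component. Second, one must verify that the killed-bridge weighting with $1_{\{t-s<\zeta(X^{\eta})\}}$ indeed reproduces the analytic Duhamel term
\[
\int_{0}^{t}\int_{\mathbb{R}_{+}^{d}}\left(p_{u}(s,z;t,\xi)-p_{u}(s,\bar{z};t,\xi)\right)g(z,s)\,\textrm{d}z\,\textrm{d}s,
\]
which amounts to a disintegration of the killed bridge and a careful matching of the survival window to the time variable, where the backward/forward duality between $L_{-u}+\frac{\partial}{\partial t}$ and $L_{u}$ (valid since $\nabla\cdot u=0$) enters. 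Once the image identity for $p_{u}$ and this disintegration are in place, the representation (\ref{u-int1}) follows by assembling the two terms.
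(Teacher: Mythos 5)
Your proposal is correct and is essentially the paper's own route: the paper likewise freezes the convection velocity, writes the Navier--Stokes system as $\left(L_{-u}-\frac{\partial}{\partial t}\right)u+g=0$ with $g=-\nabla p+F$, extends $u$ by reflection, and invokes the Dirichlet (killed-diffusion) representation of Theorem \ref{thm23-5}, whose appendix proof supplies precisely the ingredients you outline --- the image identity (\ref{pD-p}) for the absorbed transition density and the divergence-free time-reversal duality of Lemma \ref{lem31} for matching the killed bridge to the Duhamel term. Your reading of the indicator as survival on $[s,t]$ (i.e.\ $\zeta$ evaluated on the time-reversed path, $1_{\{t-s<\zeta(X^{\eta}\circ\tau_{t})\}}$, as in the appendix computation and the Section 3 lemma) is also the intended one.
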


See more proof processes and details in Appendix \ref{representation}. We therefore have formulated the Navier-Stokes equations in terms
of (\ref{u-int1}, \ref{x-eta}, \ref{grad-p-01}, \ref{grad-p3D-01}).
We shall use this formulation to implement large-eddy simulation
for viscous incompressible flows past a solid boundary.

\section{Random LES method}

We borrow a key idea from LES to carry out numerical experiments
based on the functional integral representation (\ref{u-int1}) for
viscous incompressible flows. The principal idea in LES is to facilitate
the calculation by ignoring the eddies in small length scales, which
are very expensive in computing, via filtering of the Navier--Stokes
equations. Such filtering, which can be seen as spatial averaging,
effectively removes the information of eddies in small-scale and saves
the computing cost, cf. \citep{Reynolds1895}. That is, LES method does not
aim to calculate the velocity $u(x,t)$, but only aims to compute the
local average $\tilde{u}(x,t)$ at every grid point. To implement
this approach, we choose a filter function $\chi$, which is a non-negative
function on $\mathbb{R}^{d}$, ideally with a compact support, such
that $\int_{\mathbb{R}^{d}}\chi(x)\textrm{d}x=1$. A possible choice
of the filter is the Gaussian kernel on $\mathbb{R}^{d}$ with variance
$\sigma^{2}>0$:
\begin{equation}
\chi(x)=\frac{1}{(2\pi\sigma^{2})^{\frac{d}{2}}}\exp\left(-\frac{|x|^{2}}{2\sigma^{2}}\right)\quad\textrm{ for }x\in\mathbb{R}^{d},
\end{equation}
where $\sigma^{2}>0$ should be small and is determined by the grid
size used in the numerical schemes. This filter function has its support on
$\mathbb{R}^{d}$, while when $\sigma^{2}>0$ is small enough, $\chi(x-y)$
is concentrated at about $x$. 

\subsection{Random LES for flows in $\mathbb{R}^{d}$}

For a flow in $\mathbb{R}^{d}$, the filtered velocity $\tilde{u}(x,t)$
is defined by
\begin{equation}
\tilde{u}(x,t)=\int_{\mathbb{R}^{d}}\chi(x-y)u(y,t)\mathrm{d}y\quad\textrm{ for }x\in\mathbb{R}^{d}.\label{w-av1}
\end{equation}
The filtered Navier-Stokes equations become 
\begin{equation}
\frac{\partial}{\partial t}\tilde{u}+\widetilde{(u\cdot\nabla)u}=\nu\Delta\tilde{u}-\nabla\tilde{p}+\tilde{F},
\end{equation}
and 
\begin{equation}
\nabla\cdot\tilde{u}=0.
\end{equation}
Because $\widetilde{(u\cdot\nabla)u}$ can not be simplified as $(\tilde{u}\cdot\nabla)\tilde{u}$,
so people define sub-grid-scale stress $\tau$ as 
\begin{equation}
\nabla\cdot\tau=(\tilde{u}\cdot\nabla)\tilde{u}-\widetilde{(u\cdot\nabla)u}.
\end{equation}
Therefore the filtered Navier-Stokes equation can be rewritten as
\begin{equation}
\partial_{t}\tilde{u}+(\tilde{u}\cdot\nabla)\tilde{u}=\nu\Delta\tilde{u}-\nabla\tilde{p}+\tilde{F}-\nabla\cdot\tau.
\end{equation}
The tensor $\tau$ measures the effect of small eddies on the flow
system which should be modelled. The influence of small-scale eddies
on the equation of motion is described by some other models (cf. \citep{Piomelli2002},
\citep{Smagorinsky1963} and \citep{Pitsch2006}), but it will not
be elaborated in our paper. LES effectively reduces the computational
complexity and retains the main physical properties. 
\begin{lem}
\label{lem-lse-r1}For a viscous incompressible in $\mathbb{R}^{d}$,
the following representation holds
\begin{equation}
\tilde{u}(x,t)=\int_{\mathbb{R}^{d}}\mathbb{E}\left[\chi(x-X_{t}^{\eta})u_{0}(\eta)\right]\textrm{d}\eta+\int_{0}^{t}\int_{\mathbb{R}^{d}}\mathbb{E}\left[\chi(x-X_{t}^{\eta})g(X_{s}^{\eta},s)\right]\textrm{d}\eta\textrm{d}s,\label{lse-w-1}
\end{equation}
for $x\in\mathbb{R}^{d}$. 
\end{lem}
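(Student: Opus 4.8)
The plan is to substitute the pointwise integral representation of Theorem \ref{thm-space} into the definition \eqref{w-av1} of the filtered velocity, and then to recognise each resulting spatial integral as an expectation taken over the law of the terminal position $X_{t}^{\eta}$. First I would write \eqref{w-av1} with integration variable $\xi$,
\[
\tilde{u}(x,t)=\int_{\mathbb{R}^{d}}\chi(x-\xi)\,u(\xi,t)\,\textrm{d}\xi,
\]
and insert the representation \eqref{rep-u-b01} for $u(\xi,t)$. Under the standing assumption that $u$ is bounded and smooth, that $\chi$ is an integrable filter, and that $g=-\nabla p+F$ is controlled, Fubini's theorem permits interchanging the $\xi$-integration with the $\eta$- and $s$-integrations in both resulting terms, which I would treat separately.

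For the initial-data term I would use that $p_{u}(0,\eta;t,\cdot)$ is precisely the probability density of $X_{t}^{\eta}$, so that for each fixed $\eta$,
\[
\int_{\mathbb{R}^{d}}\chi(x-\xi)\,p_{u}(0,\eta;t,\xi)\,\textrm{d}\xi=\mathbb{E}\left[\chi(x-X_{t}^{\eta})\right].
\]
Multiplying by the deterministic factor $u_{0}(\eta)$ (which may be carried inside the expectation) and integrating in $\eta$ yields exactly the first term on the right-hand side of \eqref{lse-w-1}.

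The core step is the forcing term. After exchanging the order of integration, for fixed $\eta$ and $s$ I would collapse the factor $\chi(x-\xi)$, the conditional expectation $\mathbb{E}[g(X_{s}^{\eta},s)\mid X_{t}^{\eta}=\xi]$, and the density $p_{u}(0,\eta;t,\xi)$ into a single unconditional expectation via the tower property. Indeed, since $\chi(x-X_{t}^{\eta})$ is $\sigma(X_{t}^{\eta})$-measurable, one has
\[
\mathbb{E}\left[\chi(x-X_{t}^{\eta})\,g(X_{s}^{\eta},s)\right]=\mathbb{E}\left[\chi(x-X_{t}^{\eta})\,\mathbb{E}\left[g(X_{s}^{\eta},s)\mid X_{t}^{\eta}\right]\right]=\int_{\mathbb{R}^{d}}\chi(x-\xi)\,\mathbb{E}\left[g(X_{s}^{\eta},s)\mid X_{t}^{\eta}=\xi\right]p_{u}(0,\eta;t,\xi)\,\textrm{d}\xi,
\]
where the last equality disintegrates the expectation along the terminal marginal of $X_{t}^{\eta}$. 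Substituting this identity and reversing the earlier interchange of integrals produces the forcing term of \eqref{lse-w-1}, and combining the two contributions completes the proof.

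The step I expect to be the main obstacle is the rigorous justification of the interchange of integration and expectation (Fubini/Tonelli) together with the disintegration of the path expectation along the terminal marginal. Conceptually this is routine under the paper's regularity hypotheses, but it requires knowing that the joint law of $(X_{s}^{\eta},X_{t}^{\eta})$ admits a regular conditional distribution and that the integrand $\chi(x-X_{t}^{\eta})\,g(X_{s}^{\eta},s)$ is integrable uniformly enough in $(\eta,s)$ for the interchange to be valid; the compact support (or rapid decay) of $\chi$ together with boundedness of $g$ on the relevant time interval is what makes these conditions hold.
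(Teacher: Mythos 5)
Your proof is correct and takes essentially the same route as the paper's: substitute the representation \eqref{rep-u-b01} of Theorem \ref{thm-space} into the filtered velocity \eqref{w-av1}, interchange the integrals by Fubini, and identify each inner $\xi$-integral as an unconditional expectation by disintegrating along the terminal density $p_{u}(0,\eta;t,\cdot)$ via the tower property. The paper performs exactly this computation, merely leaving the Fubini and tower-property steps implicit, so your write-up is a slightly more explicit version of the same argument.
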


\begin{proof}
Using the integral representation (\ref{rep-u-b01}) we obtain that
\begin{align*}
\tilde{u}(x,t) & =\int_{\mathbb{R}^{d}}\int_{\mathbb{R}^{d}}\chi(x-y)p_{u}(0,\eta;t,y)u_{0}(\eta)\textrm{d}y\textrm{d}\eta\\
 & \;+\int_{0}^{t}\int_{\mathbb{R}^{d}}\int_{\mathbb{R}^{d}}\chi(x-y)\mathbb{E}\left[\left.g(X_{s}^{\eta},s)\right|X_{t}^{\eta}=y\right]p_{u}(0,\eta;t,y)\textrm{d}y\textrm{d}\eta\textrm{d}s\\
 & =\int_{\mathbb{R}^{d}}\mathbb{E}\left[\chi(x-X_{t}^{\eta})u_{0}(\eta)\right]\textrm{d}\eta+\int_{0}^{t}\int_{\mathbb{R}^{d}}\mathbb{E}\left[\chi(x-X_{t}^{\eta})g(X_{s}^{\eta},s)\right]\textrm{d}\eta\textrm{d}s,
\end{align*}
which completes the proof.
\end{proof}
Let us now write down the numerical scheme of random LES.

\vskip0.3truecm

\begin{itemize}
    \item \textbf{\emph{Numerical schemes for random LES method}}
\end{itemize}

\vskip0.3truecm

Based on this representation, we are able to propose the following
random LES scheme for flows without space constrain as the following.
\begin{equation}
\textrm{d}Y_{t}^{\eta}=U(Y_{t}^{\eta},t)\textrm{d}t+\sqrt{2\nu}\textrm{d}B_{t},\quad Y_{0}^{\eta}=\eta,\label{w1}
\end{equation}
for every $\eta\in\mathbb{R}^{d}$, where $B=(B^{1},\cdots,B^{d})$
is a standard Brownian motion on a probability space $(\varOmega,\mathcal{F},\mathbb{P})$;
\begin{equation}
U(x,t)=\int_{\mathbb{R}^{d}}\mathbb{E}\left[\chi(x-Y_{t}^{\eta})u_{0}(\eta)\right]\textrm{d}\eta+\int_{0}^{t}\int_{\mathbb{R}^{d}}\mathbb{E}\left[\chi(x-Y_{t}^{\eta})G(Y_{s}^{\eta},s)\right]\textrm{d}\eta\textrm{d}s,\label{w2}
\end{equation}
for $x\in\mathbb{R}^{d}$, where $u_{0}$ is the initial velocity,
\begin{equation}
G(x,t)=-\nabla P(x,t)+F(x,t),\label{w3}
\end{equation}
for every $x\in\mathbb{R}^{d}$ and $t\geq0$,  $F=(F^{1},F^{2},F^{3})$
is an external force applying to the flow, and
\begin{equation}
\nabla P(x,t)=\int_{\mathbb{R}^{d}}K_{d}(x,y)\left.\left(\nabla\cdot F-\sum_{i,j=1}^{d}\frac{\partial U^{j}}{\partial y^{i}}\frac{\partial U^{i}}{\partial y^{j}}\right)\right|_{(y,t)}\textrm{d}y,\label{w4}
\end{equation}
for every $x\in\mathbb{R}^{d}$ and $t\geq0$, where $K_{d}(x,y)=\frac{1}{s_{d-1}}\frac{x-y}{|x-y|^{d}}$
for $x\neq y$, $s_{1}=2\pi$ and $s_{2}=4\pi$.

\subsection{Random LES for flows past a wall}

For a flow constrained in $\mathbb{R}_{+}^{d}$, i.e. for a flow past
a flat plate, the velocity field $u(x,t)=(u^{1}(x,t),\cdots,u^{d}(x,t))$
is extended to the whole space $\mathbb{R}^{d}$ by reflection about
$x_{d}=0$ so that the relation $u(\bar{x},t)=\overline{u(x,t)}$
is maintained for all $t$ and $x$. Then the filtered velocity is
defined to be
\[
\tilde{u}(x,t)=\int_{\mathbb{R}^{d}}\chi(x-y)u(y,t)\mathrm{d}y,
\]
for every $x\in\mathbb{R}^{d}$ and $t$. Therefore, by a change of
variable, 
\begin{equation}
\tilde{u}^{i}(x,t)=\int_{\mathbb{R}_{+}^{d}}\chi_{+}(x,y)u^{i}(y,t)\mathrm{d}y,\quad\tilde{u}^{d}(x,t)=\int_{\mathbb{R}_{+}^{d}}\chi_{-}(x,y)u^{d}(y,t)\mathrm{d}y,\label{av-qq1}    
\end{equation}
for $i=1,\cdots,d-1$, and any $x$ and $t$, where
\begin{equation}
\chi_{\pm}(x,y)=\chi(x-y)\pm\chi(x-\bar{y}),\label{plus-filter}
\end{equation}
for any $x,y\in\mathbb{R}^{d}$. In other words 
\[
\tilde{u}^{i}(x,t)=\int_{\mathbb{R}_{+}^{d}}\chi_{i}(x,y)u^{i}(y,t)\mathrm{d}y,
\]
for all $i=1,\cdots,d$, where $\chi_{i}=\chi_{+}$ for $i=1,\cdots,d-1$
and $\chi_{d}=\chi_{-}$.
\begin{lem}
For a flow in $\mathbb{R}_{+}^{d}$, it holds that
\begin{align}
\tilde{u}^{i}(x,t) & =\int_{\mathbb{R}_{+}^{d}}\left(\mathbb{E}\left[\chi_{i}(x,X_{t}^{\eta})1_{\left\{ t<\zeta\right\} }\right]-\mathbb{E}\left[\chi_{i}(x,X_{t}^{\bar{\eta}})1_{\left\{ t<\zeta\right\} }\right]\right)u_{0}^{i}(\eta)\textrm{d}\eta \nonumber \\
 & \;+\int_{0}^{t}\int_{\mathbb{R}_{+}^{d}}\mathbb{E}\left[1_{\left\{ t<\zeta,t-s<\zeta\circ\tau_{t}\right\} }\chi_{i}(x,X_{t}^{\eta})g^{i}(X_{s}^{\eta},s)\right]\textrm{d}\eta\textrm{d}s,
\end{align}
for $x\in\mathbb{R}_{+}^{d}$ and $t>0$, $\chi_{i}=\chi_{+}$ for
$i=1,\cdots,d-1$ and $\chi_{d}=\chi_{-}$. Here $\tau_{t}$ is the
time reverse operator from $t>0$ and $\zeta=\inf\left\{ t>0:X_{t}\notin\mathbb{R}_{+}^{d}\right\} $.
\end{lem}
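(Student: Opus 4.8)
The plan is to mirror the proof of Lemma \ref{lem-lse-r1}, but starting from the half-space representation (\ref{u-int1}) of Theorem \ref{thmflowpastw} and carrying the reflection that is already built into the filter $\chi_i$. Concretely, I would begin from the change-of-variable identity (\ref{av-qq1}), namely $\tilde u^i(x,t)=\int_{\mathbb{R}_+^d}\chi_i(x,y)u^i(y,t)\textrm{d}y$ with $\chi_i=\chi_+$ for $i<d$ and $\chi_d=\chi_-$, and substitute the $i$-th component of (\ref{u-int1}) for $u^i(y,t)$. An application of Fubini's theorem then moves the filter $\chi_i(x,\cdot)$ inside the $\textrm{d}\eta$ (and $\textrm{d}s$) integrals, so that the whole computation reduces to evaluating spatial integrals of $\chi_i(x,y)$ against the transition data of the Brownian fluid particles. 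Throughout I would exploit the reflection equivariance of the extended diffusion: since $u(\bar x,t)=\overline{u(x,t)}$, the SDE (\ref{x-eta}) is equivariant under $y\mapsto\bar y$, so $\overline{X^\eta}$ and $X^{\bar\eta}$ have the same law and in particular $p_u(0,\eta;t,y)=p_u(0,\bar\eta;t,\bar y)$.

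For the initial-value term I would interchange the order of integration to obtain $\int_{\mathbb{R}_+^d}u_0^i(\eta)\big[\int_{\mathbb{R}_+^d}\chi_i(x,y)(p_u(0,\eta;t,y)-p_u(0,\bar\eta;t,y))\textrm{d}y\big]\textrm{d}\eta$. The key point is to recognise the inner bracket as a killed (Dirichlet) quantity: applying the strong Markov property at the first boundary hitting time $\zeta$ together with the reflection symmetry above, the difference $p_u(0,\eta;t,y)-p_u(0,\bar\eta;t,y)$ equals the transition density of $X^\eta$ restricted to the survival event $\{t<\zeta\}$, because on $\{\zeta<t\}$ the contributions issued from $\eta$ and from $\bar\eta$ cancel along the boundary. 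This upgrades the naive terminal-location indicator $1_{\{X_t^\eta\in\mathbb{R}_+^d\}}$ to the survival indicator $1_{\{t<\zeta\}}$ and reproduces the first line of the claim; the companion term $\mathbb{E}[\chi_i(x,X_t^{\bar\eta})1_{\{t<\zeta\}}]$ is retained to exhibit the reflection structure and in fact vanishes, since $\bar\eta\notin\mathbb{R}_+^d$ forces $\zeta=0$.

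For the forcing term I would first absorb the conditional expectation exactly as in Lemma \ref{lem-lse-r1}: since $p_u(0,\eta;t,y)$ is the density of $X_t^\eta$, integrating $\chi_i(x,y)\mathbb{E}[\,\cdot\,|X_t^\eta=y]p_u(0,\eta;t,y)$ over $y\in\mathbb{R}_+^d$ collapses, by the tower property, to $\mathbb{E}[\chi_i(x,X_t^\eta)1_{\{X_t^\eta\in\mathbb{R}_+^d\}}1_{\{t-s<\zeta\}}g^i(X_s^\eta,s)]$. It then remains to convert these exit-time constraints into the advertised pair $1_{\{t<\zeta,\,t-s<\zeta\circ\tau_t\}}$. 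Here the reflection term $\chi(x-\bar y)$ inside $\chi_i$, together with the reflection principle, again promotes the terminal constraint to full survival up to time $t$, while the time-reversal operator $\tau_t$ records that the survival relevant to the source insertion at the intermediate time $s$ is the one read backward from $t$, that is, survival of $X^\eta$ on $(s,t)$, which is exactly $\{t-s<\zeta\circ\tau_t\}$.

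I expect the forcing term to be the main obstacle. The initial term is a clean reflection-principle computation, but for the source term one must track the exit-time (excursion) events of the conditioned bridge diffusion and match the single indicator of (\ref{u-int1}) with the forward/backward pair in the statement, being careful about the time orientation encoded by $\tau_t$; this is where the bookkeeping of the reflection principle for space-time data is genuinely delicate. A secondary point requiring care is the justification of Fubini's theorem and of the boundary manipulations, given that the data $g=-\nabla p+F$ is paired with the smooth but only rapidly-decaying filter $\chi_i$ rather than with a compactly supported kernel.
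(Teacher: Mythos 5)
Your proposal follows the paper's own proof essentially step for step: the paper likewise starts from $\tilde{u}^{i}(x,t)=\int_{\mathbb{R}_{+}^{d}}\chi_{i}(x,y)u^{i}(y,t)\,\textrm{d}y$, substitutes the representation of Theorem \ref{thmflowpastw}, applies the Fubini theorem, and then identifies the resulting density integrals and conditional expectations with path-space expectations carrying the indicators $1_{\left\{ t<\zeta\right\} }$ and $1_{\{t-s<\zeta\circ\tau_{t}\}}$. The reflection-principle cancellation for the initial-value term and the matching of the exit-time indicators that you work out explicitly are exactly the identifications the paper performs tacitly in its final display, so your additional bookkeeping elaborates rather than departs from the paper's argument.
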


\begin{proof}
According to Theorem \ref{thmflowpastw}, 
\begin{align*}
u^{i}(y,t) & =\int_{\mathbb{R}_{+}^{d}}\left(p_{u}(0,\eta;t,y)-p_{u}(0,\bar{\eta};t,y)\right)u_{0}^{i}(\eta)\textrm{d}\eta\nonumber \\
 & +\int_{0}^{t}\int_{\mathbb{R}_{+}^{d}}\mathbb{E}\left[\left.1_{\{t-s<\zeta(X^{\eta}\circ\tau_{t})\}}g^{i}(X_{s}^{\eta},s)\right|X_{t}^{\eta}=\xi\right]p_{u}(0,\eta;t,y)\textrm{d}\eta\textrm{d}s,
\end{align*}
for all $i=1,\cdots,d$, where $\tau_{t}$ is the time reverse from
$t$. By using the Fubini theorem, we have the following integral
representation.
\begin{align*}
&\int_{\mathbb{R}_{+}^{d}}\chi_{i}(x,y)u^{i}(y,t)  \\ &=\int_{\mathbb{R}_{+}^{d}}\left(\int_{\mathbb{R}_{+}^{d}}\chi_{i}(x,y)p_{u}(0,\eta;t,y)\textrm{d}y-\int_{\mathbb{R}_{+}^{d}}\chi_{i}(x,y)p_{u}(0,\bar{\eta};t,y)\textrm{d}y\right)u_{0}^{i}(\eta)\textrm{d}\eta\nonumber \\
 & +\int_{0}^{t}\int_{\mathbb{R}_{+}^{d}}\int_{\mathbb{R}_{+}^{d}}\chi_{i}(x,y)\mathbb{E}\left[\left.1_{\{t-s<\zeta(X^{\eta})\}}g^{i}(X_{s}^{\eta},s)\right|X_{t}^{\eta}=\xi\right]p_{u}(0,\eta;t,y)\textrm{d}y\textrm{d}\eta\textrm{d}s\label{u-int1-1-1}\\
 & =\int_{\mathbb{R}_{+}^{d}}\left(\mathbb{E}\left[\chi_{i}(x,X_{t}^{\eta})1_{\left\{ t<\zeta\right\} }\right]-\mathbb{E}\left[\chi_{i}(x,X_{t}^{\bar{\eta}})1_{\left\{ t<\zeta\right\} }\right]\right)u_{0}^{i}(\eta)\textrm{d}\eta\\
 & +\int_{0}^{t}\int_{\mathbb{R}_{+}^{d}}\mathbb{E}\left[1_{\left\{ t<\zeta\right\} }1_{\{t-s<\zeta\circ\tau_{t}\}}\chi_{i}(x,X_{t}^{\eta})g^{i}(X_{s}^{\eta},s)\right]\textrm{d}\eta\textrm{d}s,
\end{align*}
which completes the proof.
\end{proof}
This representation suggests the following approximation.

\vskip0.3truecm

\begin{itemize}
    \item \textbf{\emph{Numerical schemes for random LES method for flows past walls}}
\end{itemize}

\vskip0.3truecm

Run a diffusion with vector field $U(x,t)$:

\begin{equation}
\textrm{d}Y_{t}^{\eta}=U(Y_{t}^{\eta},t)\textrm{d}t+\sqrt{2\nu}\textrm{d}B_{t},\quad Y_{0}^{\eta}=\eta,\label{half1}
\end{equation}
for every $\eta\in\mathbb{R}^{d}$, where $B=(B^{1},\cdots,B^{d})$
is a standard Brownian motion on a probability space $(\varOmega,\mathcal{F},\mathbb{P})$;
\begin{align}
U^{i}(x,t) & =\int_{\mathbb{R}_{+}^{d}}\left(\mathbb{E}\left[\chi_{i}(x,Y_{t}^{\eta})1_{\left\{ t<\zeta\right\} }\right]-\mathbb{E}\left[\chi_{i}(x,Y_{t}^{\bar{\eta}})1_{\left\{ t<\zeta\right\} }\right]\right)u_{0}^{i}(\eta)\textrm{d}\eta\nonumber \\
 & \;+\int_{0}^{t}\int_{\mathbb{R}_{+}^{d}}\mathbb{E}\left[1_{\left\{ t<\zeta,t-s<\zeta\circ\tau_{t}\right\} }\chi_{i}(x,Y_{t}^{\eta})G^{i}(Y_{s}^{\eta},s)\right]\textrm{d}\eta\textrm{d}s,\label{half2}
\end{align}
for $x\in\mathbb{R}_{+}^{d}$, $i=1,\cdots,d$, and 
\begin{equation}
U(x,t)=\overline{U(\bar{x},t)}\quad\textrm{ for }x\in\mathbb{R}_{-}^{d},\quad U(x,t)=0\quad\textrm{ if }x_{d}=0,\label{half3}
\end{equation}
where $u_{0}$ is the initial velocity;
\begin{equation}
G(x,t)=-\nabla P(x,t)+F(x,t)\quad\textrm{ for }x\in\mathbb{R}_{+}^{d},t\geq0,\label{half4}
\end{equation}
where $F=(F^{1},F^{2},F^{3})$ is an external force applying to the
flow, and

(1) if $d=2$,
\begin{align}
\nabla P(x,t) & =\int_{\mathbb{R}_{+}^{2}}K_{2}^{+}(x,y)\left.\left(\nabla\cdot F-\sum_{i,j=1}^{2}\frac{\partial U^{j}}{\partial y^{i}}\frac{\partial U^{i}}{\partial y^{j}}\right)\right|_{(y,t)}\textrm{d}y\nonumber \\
 & \;+\int_{\mathbb{R}}K_{2}^{+}(x,(y_{1},0))F^{2}(y_{1},0)\textrm{d}y_{1}\nonumber \\
 & \;+\nu\int_{\mathbb{R}}\frac{\partial K_{2}^{+}}{\partial y_{1}}(x,(y_{1},0))\frac{\partial U^{1}}{\partial y_{2}}(y_{1},0)\textrm{d}y_{1},\label{half5-2D}
\end{align}

(2) if $d=3$,
\begin{align}
\nabla P(x,t) & =\int_{\mathbb{R}_{+}^{3}}K_{3}^{+}(x,y)\left.\left(\nabla\cdot F-\sum_{i,j=1}^{3}\frac{\partial U^{j}}{\partial y^{i}}\frac{\partial U^{i}}{\partial y^{j}}\right)\right|_{(y,t)}\textrm{d}y\nonumber \\
 & \;+\int_{\mathbb{R}^{2}}K_{3}^{+}(x,(y_{1},y_{2},0))F^{3}(y_{1},y_{2},0)\textrm{d}y_{1}\textrm{d}y_{2}\nonumber \\
 & \;+\nu\int_{\mathbb{R}^{2}}\frac{\partial K_{3}^{+}}{\partial y_{1}}(x,(y_{1},y_{2},0))\frac{\partial U^{1}}{\partial y_{3}}(y_{1},y_{2},0)\textrm{d}y_{1}\textrm{d}y_{2}\nonumber \\
 & \;+\nu\int_{\mathbb{R}^{2}}\frac{\partial K_{3}^{+}}{\partial y_{2}}(x,(y_{1},y_{2},0))\frac{\partial U^{2}}{\partial y_{3}}(y_{1},y_{2},0)\textrm{d}y_{1}\textrm{d}y_{2},\label{half3D}
\end{align}
for $x\in\mathbb{R}_{+}^{d}$ and $t\geq0$.

\section{Numerical experiments and analysis}
In this section, we present the numerical schemes in detail. Subsequently, we conduct comprehensive experiments to rigorously evaluate the effectiveness of the Random LES method. Finally, we provide an in-depth analysis of the proposed approach and perform systematic comparisons with existing methods, highlighting several distinct advantages of our technique.

\subsection{Numerical schemes and variables settings}

In this section, we present comprehensive numerical experiments based
on the scheme defined in Section 3. We are able to present
the numerical scheme based on above formulas, and we use the Euler method to
discrete the integral. The initial velocity $U(x,0)$ and external
force $F$ is given, so the $\nabla\cdot F$ can be calculated directly.

Set mesh size $s>0$, time step $\delta>0$ and kinematic viscosity
$\nu>0$. For $i_{1},i_{2},i_{3}\in\mathbb{Z}$, denote $y^{i_{1,}i_{2},i_{3}}=(i_{1},i_{2},i_{3})s$,
$U^{i_{1},i_{2},i_{3}}=U(y^{i_{1,}i_{2},i_{3}},0)$. In numerical
scheme, we drop the expectation and use one-copy of Brownian particles.
We can discrete the stochastic differential equation (\ref{half1}) following
Euler scheme: for $t_{i}=i\delta,i=0,1,2,\cdots$.

\begin{equation}
    Y_{t_{k}}^{i_{1},i_{2},i_{3}}=Y_{t_{k-1}}^{i_{1},i_{2},i_{3}}+\delta U(Y_{t_{k-1}}^{i_{1},i_{2},i_{3}},t_{k-1})+\sqrt{2\nu}(B_{t_{k}}-B_{t_{k-1}}),\quad Y_{0}^{i_{1},i_{2},i_{3}}=y^{i_{1,}i_{2},i_{3}},
\end{equation}

where $Y_{t_{k}}^{i_{1},i_{2},i_{3}}$ denotes $Y_{t_{k}}^{x^{i_{1},i_{2},i_{3}}}$
for simplicity. However, the forward Euler method is not considered as an advanced and accurate enough method. To better calculate the stochastic differential equation, we use the Milstein method \citep{Milstein1975}. The Milstein method takes into account the derivative of the diffusion term and uses the It\^o's lemma expansion to correct the discretization error, the scheme is :
\begin{align}
    Y_{t_{k}}^{i_{1},i_{2},i_{3}} &=Y_{t_{k-1}}^{i_{1},i_{2},i_{3}}+\delta U(Y_{t_{k-1}}^{i_{1},i_{2},i_{3}},t_{k-1})+\sqrt{2\nu}(B_{t_{k}}-B_{t_{k-1}}) \nonumber\\
    &+\frac{1}{2}U(Y_{t_{k-1}}^{i_{1},i_{2},i_{3}},t_{k-1})U'(Y_{t_{k-1}}^{i_{1},i_{2},i_{3}},t_{k-1})[(B_{t_{k}}-B_{t_{k-1}})^2-\delta].
\end{align}

The $U'$ can be calculated explicitly. Then integral representations in (\ref{half2}) and
(\ref{half3D}) can be discretized as follows: 
\begin{align}
U(x,t_{k}) & =\sum_{i_{1},i_{2}\in \mathbb{R},i_{3}>0}s^{3}\left[1_{\mathbb{R}_{+}^{3}}\left(Y_{t_{l}}^{i_{1},i_{2},i_{3}}\right)\chi_i(x-Y_{t_{k}}^{i_{1},i_{2},i_{3}})\right]U_0^{i_{1},i_{2},i_{3}}\nonumber \\
& -\sum_{i_{1},i_{2}\in \mathbb{R},i_{3}>0}s^{3}\left[1_{\mathbb{R}_{+}^{3}}\left(Y_{t_{l}}^{i_{1},i_{2},-i_{3}}\right)\chi_i(x-Y_{t_{k}}^{i_{1},i_{2},-i_{3}})\right]U_0^{i_{1},i_{2},i_{3}}\nonumber \\
 & +\sum_{i_{1},i_{2}\in \mathbb{R},i_{3}>0}\sum_{j=1}^{k}s^{3}\delta\prod_{l=j}^{k}1_{\mathbb{R}_{+}^{3}}\left(Y_{t_{l}}^{i_{1},i_{2},i_{3}}\right)\chi_i(x-Y_{t_{k}}^{i_{1},i_{2},i_{3}})G(X_{t_{j-1}}^{i_{1},i_{2},i_{3}},t_{j-1}),
\end{align}
for $x\in\mathbb{R}_{+}^d$, and 
\begin{equation}
U(x,t)=\overline{U(\bar{x},t)}\quad\textrm{ for }x\in\mathbb{R}_{-}^{d},\quad U(x,t)=0\quad\textrm{ for } x_3=0,
\end{equation}
where $G(x,t_{k})=-\nabla P(x,t_{k})+F(x,t_{k})$,
\begin{align}
\nabla P(x,t_k) & =\sum_{i_{1},i_{2}\in \mathbb{R},i_{3}>0}K_{3}^{+}(x,(i_1,i_2,i_3))\nabla\cdot F((i_1,i_2,i_3),t_k)\nonumber \\
& -\sum_{i_{1},i_{2}\in \mathbb{R},i_{3}>0}K_{3}^{+}(x,(i_1,i_2,i_3))\sum_{i,j=1}^{3}\frac{\partial U^{j}}{\partial x^{i}}((i_1,i_2,i_3),t_k)\frac{\partial U^{i}}{\partial x^{j}}((i_1,i_2,i_3),t_k)\nonumber \\
 & \;+\sum_{i_{1},i_{2}\in \mathbb{R}}s^2K_{3}^{+}(x,(i_{1},i_{2},0))F^{3}(i_{1},i_{2},0)\nonumber \\
 & \;+\nu\sum_{i_{1},i_{2}\in \mathbb{R}}s^2\frac{\partial K_{3}^{+}}{\partial y_{1}}(x,(i_{1},i_{2},0))\frac{\partial U^{1}}{\partial x_{3}}((i_{1},i_{2},0),t_k)\nonumber \\
 & \;+\nu\sum_{i_{1},i_{2}\in \mathbb{R}}s^2\frac{\partial K_{3}^{+}}{\partial y_{2}}(x,(i_{1},i_{2},0))\frac{\partial U^{2}}{\partial x_{3}}((i_{1},i_{2},0),t_k),
\end{align}
for $x=(x_{1},x_{2},x_{3})$ where $x_{3}>0$, the $\frac{\partial U^{j}}{\partial x^{i}}$ can be explicitly represented and calculated directly. Each variable is updated based on its value from the previous time step, the iterative computational procedure is as follows:
\begin{align*}
    & U_0,Y_0 \  \rightarrow  Y_{1} \xrightarrow{+G_0,U_0} U_1 \xrightarrow{+\frac{\partial U_1}{x},F}G_1,\\
    & U_{1},Y_{1} \rightarrow Y_{2}\xrightarrow{+G_1,U_0} U_2 \xrightarrow{+\frac{\partial U_2}{x},F}G_2 ,\\
    & \cdots \cdots \\
    & U_{t_{n-1}},Y_{n-1} \rightarrow Y_{T}\xrightarrow{+G_{t_{n-1}},U_0} U_T \xrightarrow{+\frac{\partial U_T}{x},F}G_T. \\
\end{align*}

To validate the correctness of the theory and numerical methods, we
conducted extensive numerical experiments. For comprehensiveness,
the following experiments included scenarios with both laminar flows
and turbulence, in two-dimensional and three-dimensional settings,
as well as in unbounded domains and with wall bounded space. From
the perspective of numerical simulation, turbulence represents a more
challenging flow regime to model. By comparing the results of laminar
flow experiments with those of turbulent flow experiments, we can
more clearly discern the differences between these two flow states.
The results demonstrate that random LES method can accurately and
efficiently simulate fluid systems.

Now we introduce some fundamental settings. Reynolds number $Re$
is defined by $Re=\frac{U_{0}L}{\nu}$, where $U_{0}$ is the main
stream velocity, $\nu$ is kinematic viscosity and $L$ is the length
scale. In unbounded experiments, the observation region is set as
a square, meaning that the horizontal length scale ($L_{h}$) and
vertical length scale ($L_{v}$) of the observation area are equal.
For the space with solid wall boundary, a smaller observation domain
and a more refined grid discretization are employed in the vertical
direction to enhance the resolution of boundary phenomena. The mesh
size $s\thicksim L\sqrt{\frac{1}{Re}}$, we set the vertical mesh
size as $s_{v}$ and the horizontal mesh size as $s_{h}$ and the
external force as $F$.

\subsection{Experiments for random LES method}

According to the numerical scheme above, we conducted four sets of experiments to simulate four distinct flow scenarios by setting different initial conditions and external forces.
\begin{itemize}
    \item \textbf{Experiment 1:} 2-dimensional laminar advective flow through a solid wall with external force 
    \item \textbf{Experiment 2:}  2-dimensional turbulent advective flow through a solid wall with external force 
    \item \textbf{Experiment 3:}  3-dimensional flow of an initially quiescent fluid driven by tensile forces. 
    \item \textbf{Experiment 4:} 3-dimensional turbulent advective flow through a solid wall with external force 
\end{itemize}
\subsubsection{Experiment 1: 2-dimensional advection through a wall (laminar flows)}

\label{l2b} Now we consider the bounded case. In this part, $\nu=0.3$,
$Re=1000$, length scale $L_{h}=3\pi,L_{v}=0.2\pi$, so that $U_{0}=\frac{\nu}{L_{h}}Re=31.83$.
The mesh size $s_{h}=\frac{3\pi}{50},s_{v}=\frac{0.2\pi}{50}$. and
the time step $\delta=0.001$. The numerical experiment is demonstrated
at times $t=0.03,t=0.06,t=0.09$. We set the initial velocity to be
of the form $U(x,0)=(U_{0},0)$, and set force $F=(10e^{(-\frac{(x_{1}^{2}+x_{2}^{2})}{2s_{h}s_{v}})},-9.81)$.
The velocity field and the vorticity field are shown in Figure \ref{l_bounded_2du} and \ref{l_bounded_2dw}.

\begin{figure}[H]
\centering
\includegraphics[width=1\textwidth]{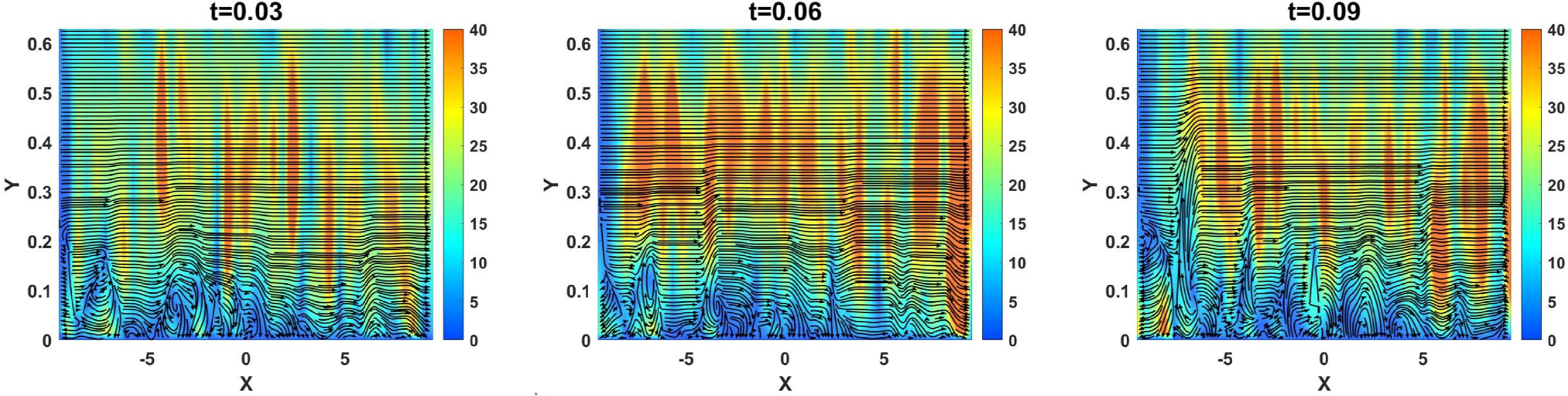}
\centering \caption{\foreignlanguage{english}{Velocity fields of wall-bounded laminar
flows on $\mathbb{R}_{+}^{2}$}}
\label{l_bounded_2du} 
\end{figure}
\begin{figure}[H]
\centering
\includegraphics[width=1\textwidth]{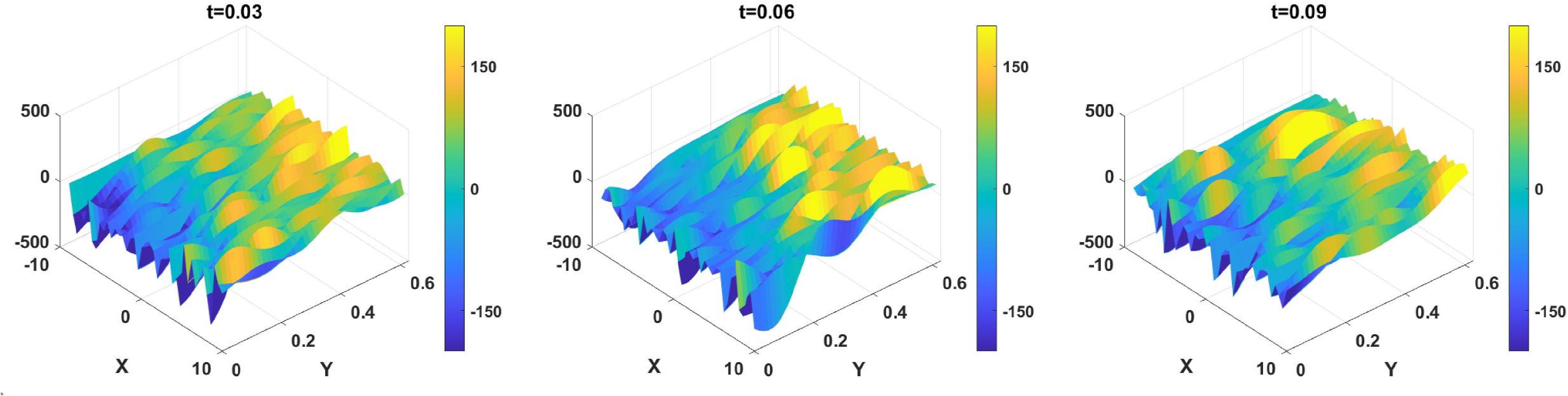}
\centering \caption{\foreignlanguage{english}{Vorticity of wall-bounded laminar
flows on $\mathbb{R}_{+}^{2}$}}
\label{l_bounded_2dw} 
\end{figure}
As shown in Figure \ref{l_bounded_2du}, the fluid system initially
remains in a relatively stable state. Over time, a small number of
vortices begin to emerge at the boundary and gradually propagate into
the interior of the domain. This phenomenon is primarily driven by
shear forces, specifically the combined effects of boundary pressure
and nonlinear forces. Notably, compared to the subsequent turbulent
flow experiments (Figure \ref{t_bounded_2du}), the fluid dynamics
in the laminar flow experiments exhibit significantly lower intensity,
with fewer vortices generated. 

In laminar or low Reynolds number case, where viscous forces prevail over fluid inertia, the vorticity dynamics are characterized by diffusion and wall confinement. The no-slip condition at the wall ($y=0$) acts as a source of strong vorticity. However, its transport into the flow interior occurs primarily through slow molecular diffusion, leading to rapid attenuation of strength with distance from the wall. The observed incomplete vortices are, in essence, starting or separation vortices formed by the rolling-up of this diffused vorticity layer under the action of external shear. The fundamental reason for their distorted morphology is twofold: the concentration of vorticity is insufficient, and the rotational energy of the nascent vortex is too weak to overcome the powerful viscous dissipation and the anchoring effect of the wall. Consequently, these structures remain tethered to the boundary, unable to lift off and coalesce into a closed, symmetric vortex core. Their vorticity is primarily sustained by the local wall shear rate rather than by large-scale vortex entrainment. This observation is consistent with established theoretical principles.

\subsubsection{Experiment 2: 2-dimensional advection through a wall (turbulent flows)}

\label{t2b} In this subsection, we present a numerical experiment
simulating a two-dimensional turbulent flow on $\mathbb{R}_{+}^{2}$,
where the flow passes over a solid wall at a high Reynolds number.
To better capture the boundary effects, we use a smaller observation
domain and finer spatial grid resolution in the vertical direction.
The parameters for the experiment are as follows: $\nu=0.3$, $Re=5500$,
horizontal length scale $L_{h}=3\pi$, vertical length scale $L_{v}=0.2\pi$
and the initial main stream velocity $U_{0}=135.15$. The horizontal
mesh size $s_{h}=\frac{3\pi}{50}$, vertical mesh size $s_{v}=\frac{0.2\pi}{50}$
and the time step $\delta=0.0001$. The numerical results are demonstrated
at three time instances: $t=0.01,t=0.02,t=0.03$. The initial velocity
field is defined as $U(x,0)=(U_{0},0)$ while the external force is
given by $F=(100e^{(-\frac{(x_{1}^{2}+x_{2}^{2})}{2s_{h}s_{v}})},-9.81)$.
The velocity and vorticity are shown in Figure \ref{t_bounded_2du} and \ref{t_bounded_2dw}.

\begin{figure}[H]
\centering
\includegraphics[width=1\textwidth]{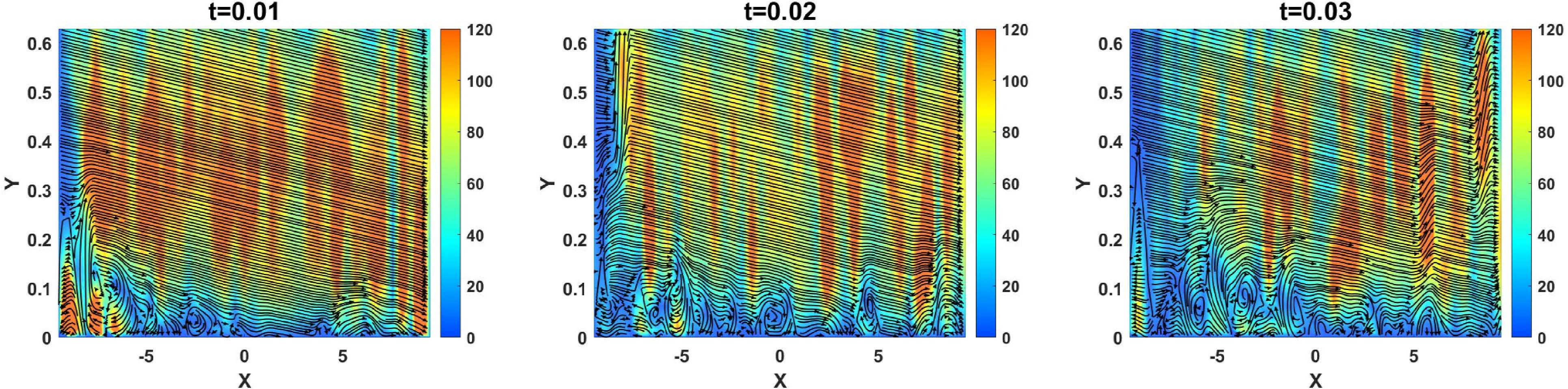} \caption{\foreignlanguage{english}{Velocity fields of wall-bounded turbulent
flows on $\mathbb{R}_{+}^{2}$}}
\label{t_bounded_2du} 
\end{figure}
\begin{figure}[H]
\centering
\includegraphics[width=1\textwidth]{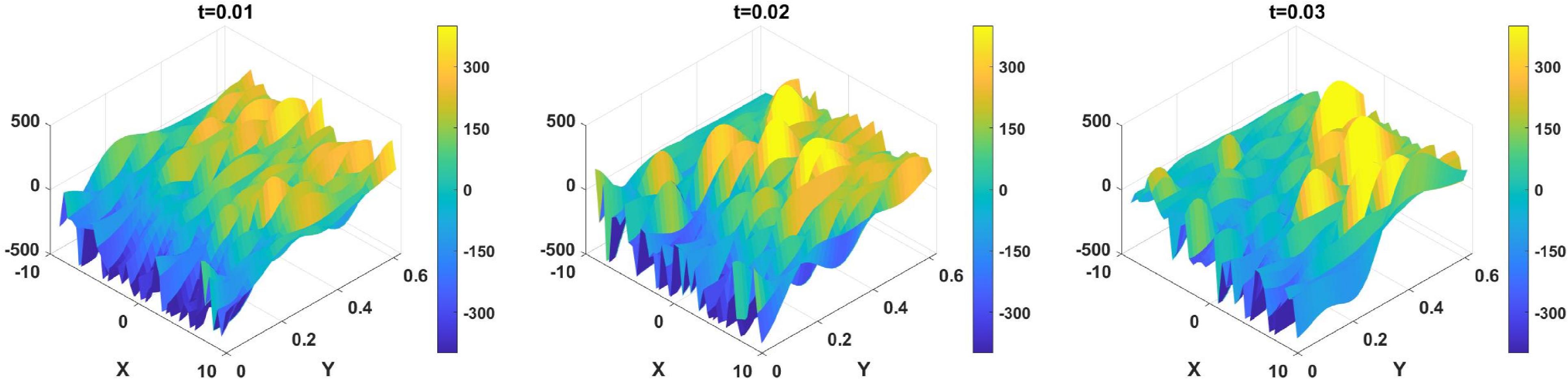} \caption{\foreignlanguage{english}{Vorticity of wall-bounded turbulent
flows on $\mathbb{R}_{+}^{2}$}}
\label{t_bounded_2dw} 
\end{figure}
As shown in Figure \ref{t_bounded_2du} and \ref{t_bounded_2dw}, the fluid system initially remains in a relatively stable state. Over time, vortices begin to
form at the boundary and gradually propagate into the interior of
the domain. This phenomenon is primarily attributed to shear forces,
specifically the combined influence of boundary pressure and nonlinear
forces. Additionally, the vorticity near the boundary exhibits significantly higher values, which is consistent with physical observations in real-world fluid flows.

In the turbulent regime, inertial effects dominate. The wall remains the primary source of vorticity, but turbulence greatly enhances its production and transport through stretching and reorientation mechanisms. Small-scale turbulent fluctuations tear and lift the wall-generated vorticity layers, rolling them into concentrated vortex tubes or packets. Compared with laminar flows (Experiment 1), the complete vortices observed in turbulent flow owe their integrity to the sufficient kinetic energy contained within their rotational cores, which enables them to lift away from the wall and extend into the outer flow region. These vortices form relatively closed streamline loops with clearly identifiable low-pressure cores, indicating a well-developed and coherent vortex structure. The larger magnitude of vorticity near the turbulent boundary does not arise from an infinitely increasing rate of wall generation, but rather from the intensified transport, accumulation, and amplification of vorticity driven by turbulent motions.

\subsubsection{Experiment 3: 3-dimensional initially stationary fluid}

In this subsection, we set an interesting 3-dimensional experiment. We assume that the flow is stationary initially (which is actually a laminar flow), then a tensile force in the same direction was applied to the fluid, observing the changes in the fluid system. We set
$\nu=0.3$, length scale $L=3\pi$, the mesh size $s=\frac{3\pi}{50}\thicksim L\sqrt{\frac{1}{Re}}$
and the time step $\delta=0.001$. The numerical experiment is demonstrated
at times $t=0.1,t=0.2,t=0.3$. We set the initial velocity to be
of the form $U(x,0)=(0,0,0)$, and
set force $F=(100e^{(-\frac{(x_{1}^{2}+x_{2}^{2}+x_{3}^{2})}{2s_{h}^{2}s_{v}})},100e^{(-\frac{(x_{1}^{2}+x_{2}^{2}+x_{3}^{2})}{2s_{h}^{2}s_{v}})},-9.81)$. The velocity field and the vorticity
field are shown in Figure \ref{a_bounded_3du} and \ref{a_bounded_3dw}.

\begin{figure}[H]
\centering
\includegraphics[width=1\textwidth]{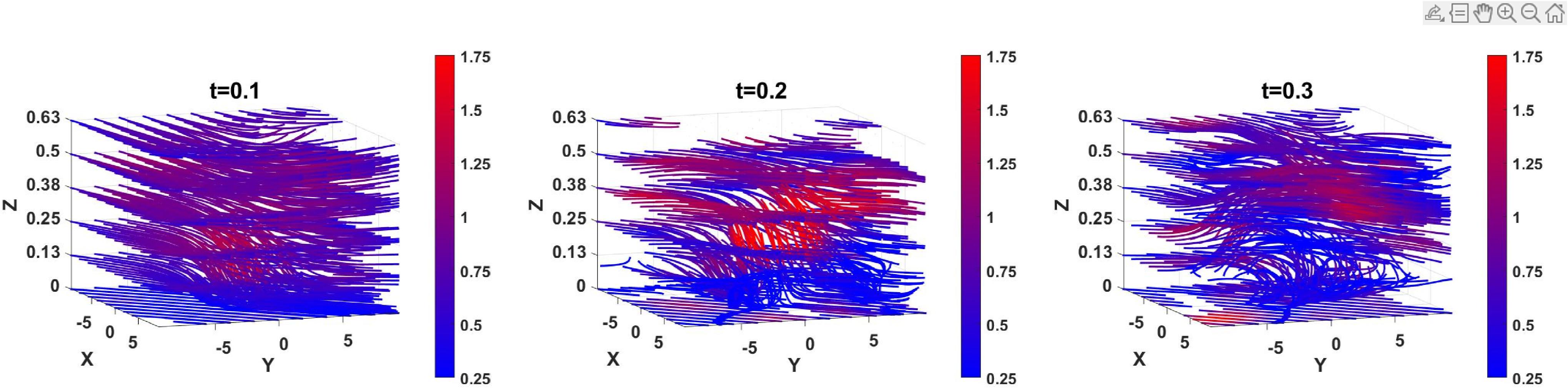}

\caption{\foreignlanguage{english}{Velocity fields wall-bounded
flows on $\mathbb{R}^{3}_{+}$}}
\label{a_bounded_3du} 
\end{figure}

\begin{figure}[H]
\centering
\includegraphics[width=1\textwidth]{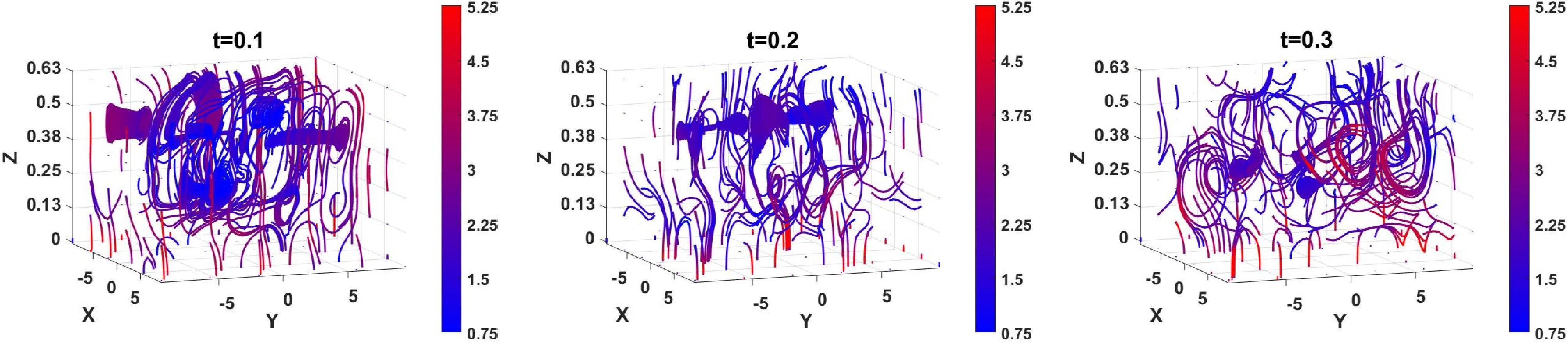}

\caption{\foreignlanguage{english}{Vorticity of wall-bounded
flows on $\mathbb{R}^{3}_{+}$}}
\label{a_bounded_3dw} 
\end{figure}

Figure \ref{a_bounded_3du} shows that the initially stationary fluid system flows along the direction of the pulling force, and as time progresses, vortices are generated at the boundaries. When a tensile force is suddenly applied to an initially stationary (laminar) fluid, the system experiences a strong streamwise acceleration that establishes a velocity gradient between the bulk and the no-slip boundaries. This gradient immediately generates vorticity at the walls, where the velocity remains zero. The wall-bounded vorticity diffuses into the fluid and is simultaneously advected by the emerging mean flow. As time evolves, the non-uniform acceleration induces spatial variations in the strain rate, producing regions of local vorticity amplification. The interaction between the streamwise stretching and the wall-generated shear promotes vortex formation near the boundaries, as observed in figure \ref{a_bounded_3du}. These vortices form from the partial roll-up of wall-generated vorticity layers under the applied extensional flow. Due to the relatively low flow velocity, the rotational energy is limited, resulting in irregular and asymmetric vortex shapes.

In order to better depict the shape of the flow, we make two-dimensional velocity field slices to visually represent the behavior of the fluid. The sectional views are presented  at the planes $x=C$, $y=C$, and $z=C$ in Figures \ref{a_bounded_3d_x}, \ref{a_bounded_3d_y}, and \ref{a_bounded_3d_z}, respectively, with $C=0.1\pi$ chosen for illustration.

\begin{figure}[H]
\includegraphics[width=1\textwidth]{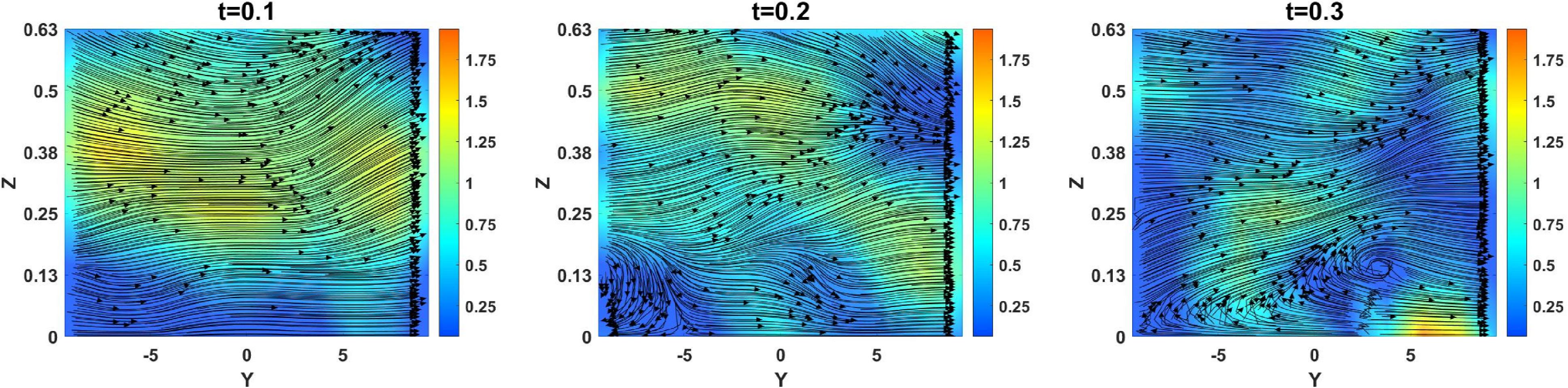}
\caption{\foreignlanguage{english}{Section Velocity fields of incompressible viscous flows on plane 
$x=C$}}
\label{a_bounded_3d_x} 
\end{figure}

\begin{figure}[H]
\includegraphics[width=1\textwidth]{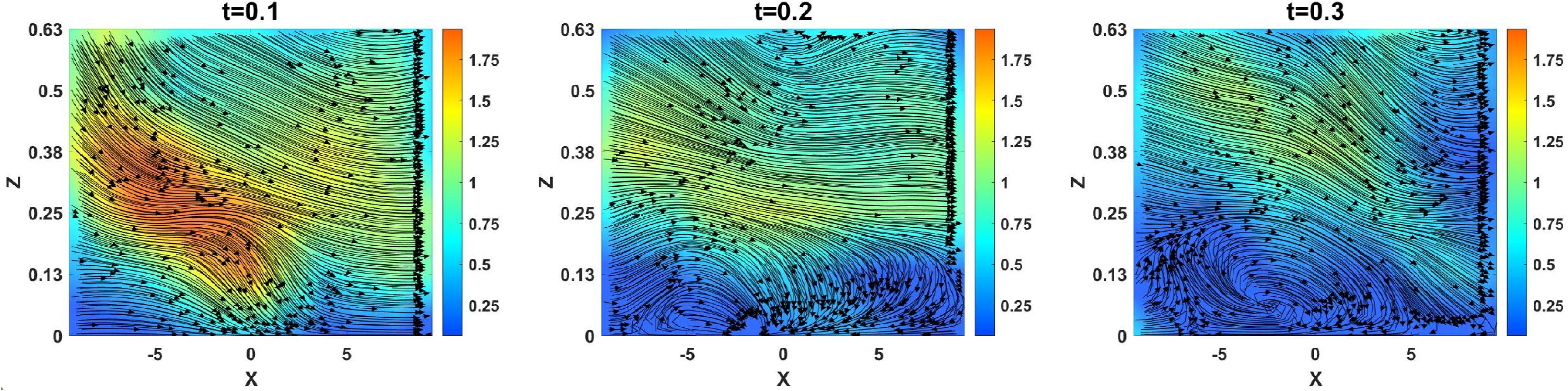}
\caption{\foreignlanguage{english}{Section Velocity fields of incompressible viscous flows on plane 
$y=C$}}
\label{a_bounded_3d_y} 
\end{figure}

\begin{figure}[H]
\includegraphics[width=1\textwidth]{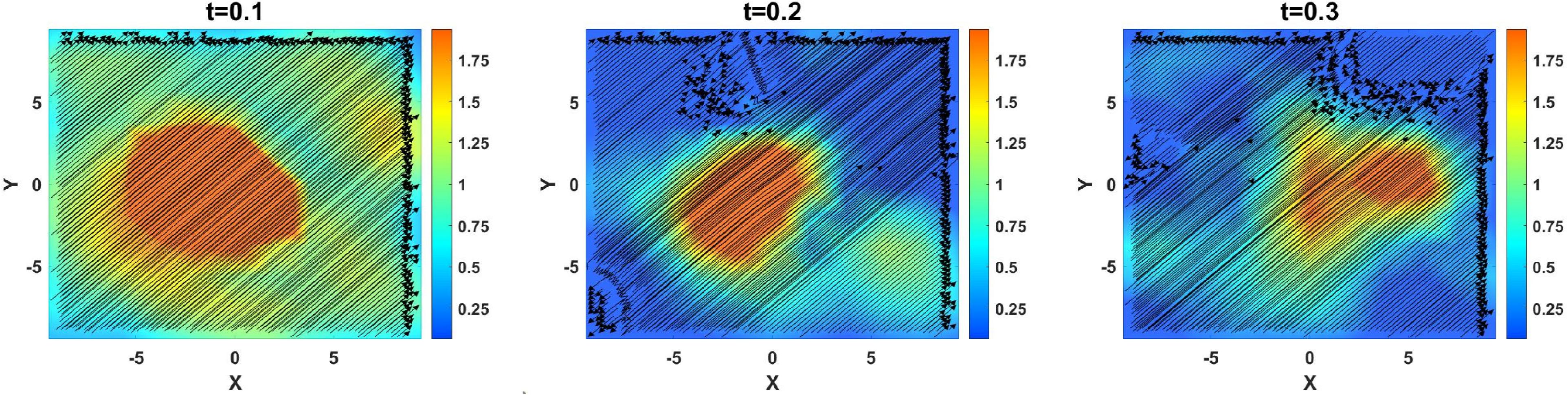}
\caption{\foreignlanguage{english}{Section Velocity fields of incompressible viscous flows on plane
$z=C$}}
\label{a_bounded_3d_z} 
\end{figure}

With the tensile force applied only in the $xy$-plane, the fluid accelerates primarily along $x$ and $y$, generating velocity gradients near the wall at $z=0$. These vertical gradients produce vorticity oriented along the $z$-direction, while horizontal slices in the $xy$-plane show little to no vortex activity. As the flow develops, the wall-generated vorticity is stretched and deformed along the $xy$-plane, forming irregular, elongated vortices confined near the boundaries, reflecting the anisotropic forcing and limited vertical transport, which is consistent with the expected physical behavior.

\subsubsection{Experiment 4: 3-dimensional advection through a wall (turbulent flows)}

\label{l3b} In order to further demonstrate the effectiveness of our method, we complete the experiment of initial advection turbulent advection passing through a wall. In this experiment, $\nu=0.3$, $Re=5500$, $L_{h}=3\pi,L_{v}=0.2\pi$,
so that $U_{0}=\frac{\nu}{L}Re=135$. The horizontal mesh size $s_{h}=\frac{3\pi}{50}$,
vertical mesh size $s_{v}=\frac{0.2\pi}{50}$ , and the time step
$\delta=0.001$. The numerical experiment is demonstrated at times
$t=0.1,t=0.2,t=0.3$, and we also show the structure of flows in $t=0.01$ to see the behavior of the fluid in the initial stage. We set the initial velocity to be of the
form $U(x,0)=(U_{0},U_{0},0)$, and set force $F=(100e^{(-\frac{(x_{1}^{2}+x_{2}^{2}+x_{3}^{2})}{2s_{h}^{2}s_{v}})},100e^{(-\frac{(x_{1}^{2}+x_{2}^{2}+x_{3}^{2})}{2s_{h}^{2}s_{v}})},-9.81)$.
The velocity field and the vorticity field are shown in Figure \ref{l_bounded_3du} and \ref{l_bounded_3dw}.

\begin{figure}[H]
\centering
\includegraphics[width=1\textwidth]{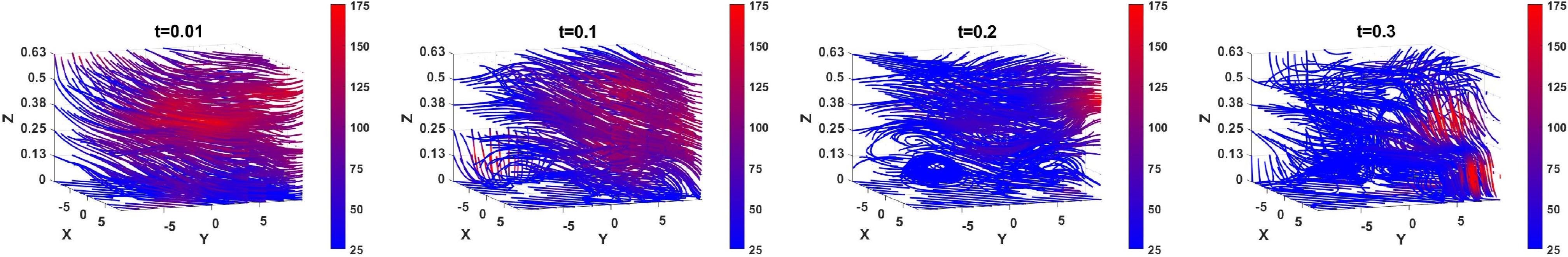}
\caption{\foreignlanguage{english}{Velocity fields of wall-bounded turbulent
flows on $\mathbb{R}_{+}^{3}$}}
\label{l_bounded_3du} 
\end{figure}

\begin{figure}[H]
\centering
\includegraphics[width=1\textwidth]{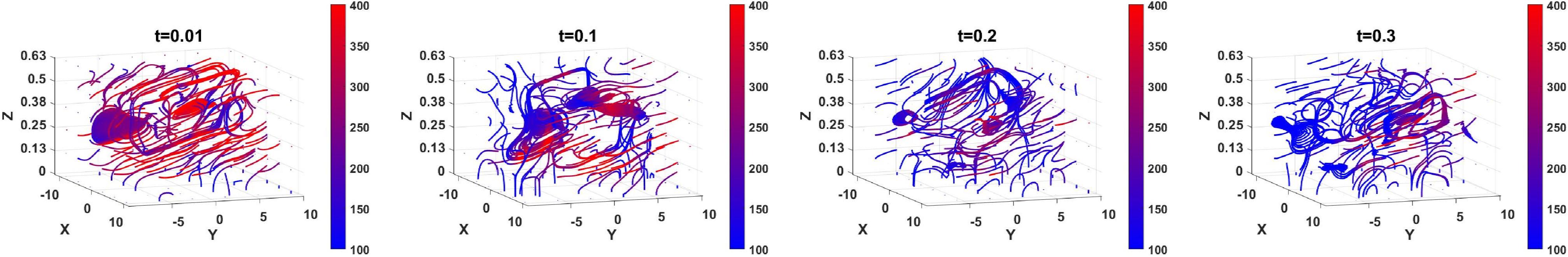}
\caption{\foreignlanguage{english}{Vorticity fields of wall-bounded turbulent
flows on $\mathbb{R}_{+}^{3}$}}
\label{l_bounded_3dw} 
\end{figure}

In this experiment, we specifically present fluid flow visualizations over a short time scale. At early times ($t = 0.01$), the flow remains close to its initial advective state, with negligible vorticity away from the walls. As the flow evolves, shear near the boundary walls generates strong wall-bounded vorticity due to the no-slip condition, which is initially confined close to the wall. This vorticity is amplified and stretched by the developing flow, producing vortices that grow in size and complexity, as illustrated in Figures \ref{l_bounded_3du} and \ref{l_bounded_3dw}.

Compared with laminar flows, inertial effects in the turbulent regime dominate, leading to highly three-dimensional and unsteady dynamics. The enhanced advection and vortex stretching in turbulence increase the local vorticity magnitude near the walls, making boundary-layer vortices more intense than in laminar cases. Small-scale turbulent fluctuations continuously distort and redistribute the wall-generated vorticity, giving rise to irregular, asymmetric vortex structures that penetrate into the outer flow. In effect, the observed vorticity field reflects the combined influence of wall generation, inertial amplification, and three-dimensional turbulent transport. To enhance the clarity of the flow structures, two-dimensional sectional views are shown. 

\begin{figure}[H]
\includegraphics[width=1\textwidth]{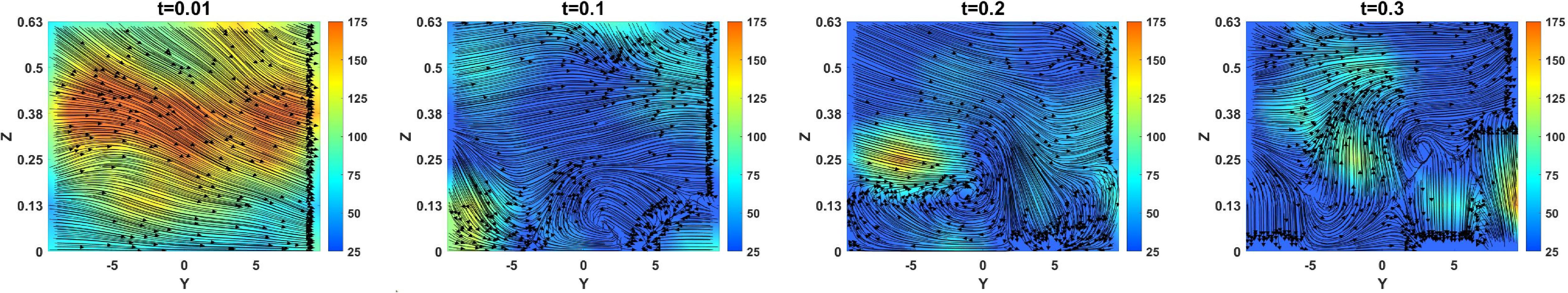}
\caption{\foreignlanguage{english}{Section Velocity fields of incompressible viscous flows on plane 
$x=C$}}
\label{l_bounded_3d_x} 
\end{figure}

\begin{figure}[H]
\includegraphics[width=1\textwidth]{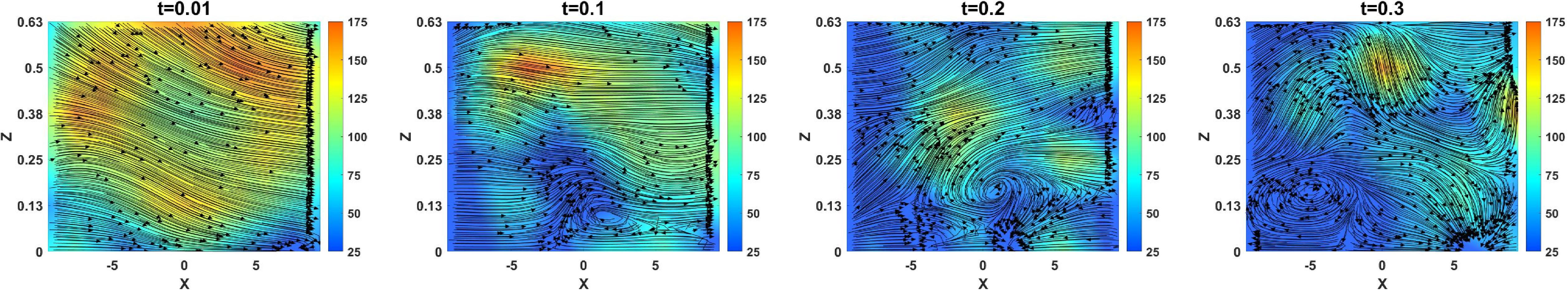}
\caption{\foreignlanguage{english}{Section Velocity fields of incompressible viscous flows on plane 
$y=C$}}
\label{l_bounded_3d_y} 
\end{figure}

\begin{figure}[H]
\includegraphics[width=1\textwidth]{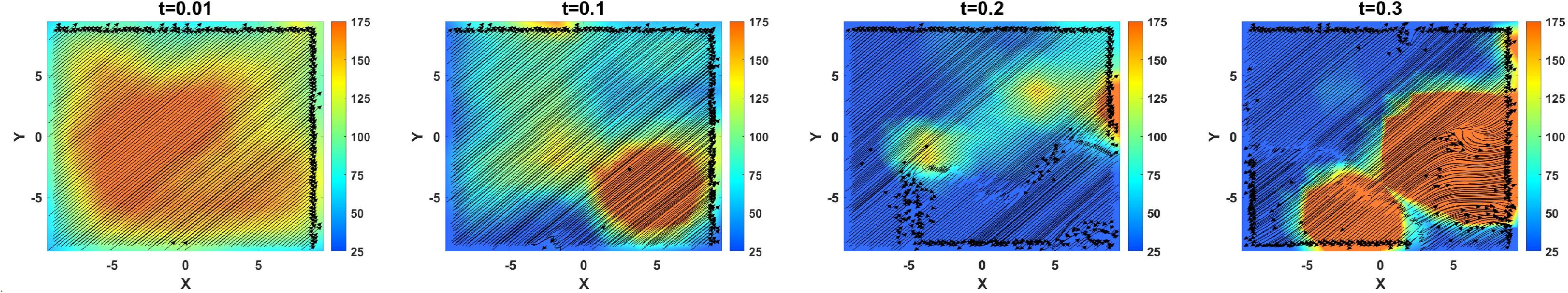}
\caption{\foreignlanguage{english}{Section Velocity fields of incompressible viscous flows on plane 
$z=C$}}
\label{l_bounded_3d_z} 
\end{figure}
In plane $x=C$ (Figure \ref{l_bounded_3d_x}) and $y=C$  (Figure \ref{l_bounded_3d_y}), the overall flow direction of the fluid remains largely aligned with the initial advective direction and the direction of the applied force. As time progresses, fully developed vortices gradually emerge from the boundaries. The explanation of the flow mechanism of the section is similar to that of Experiment 3. 

\subsection{Analysis and comparison}

\subsubsection{Divergence-free analysis}

For 2D flows, the value of divergence is easy for visualization. But for 3D case, the divergence is not easily visualized through 3D images, the absolute value of the divergence at each grid point is calculated and then the spatial average is taken. We print the time series figures of the overall average divergence of the fluid system. Following the conventions in the existing literature, a regular method to verify the divergence-free analysis is to calculate whether $\frac{|\nabla \cdot u|}{|\nabla u|}\ll1$. To achieve this, we define 
$$\text{test}(t)=\frac{1}{(\frac{L_h^2*L_v}{s_h^2*s_v})}\sum_{x\in D} \frac{|\nabla \cdot u(x,t)|}{||\nabla u(x,t)||_2},$$ 
to calculate the mean divergence value of each timestep, the results are shown in Figure \ref{fig:both1} and \ref{fig:both2}.

\begin{figure}[H]
    \centering
    \begin{subfigure}[b]{0.48\textwidth}
        \includegraphics[width=\textwidth]{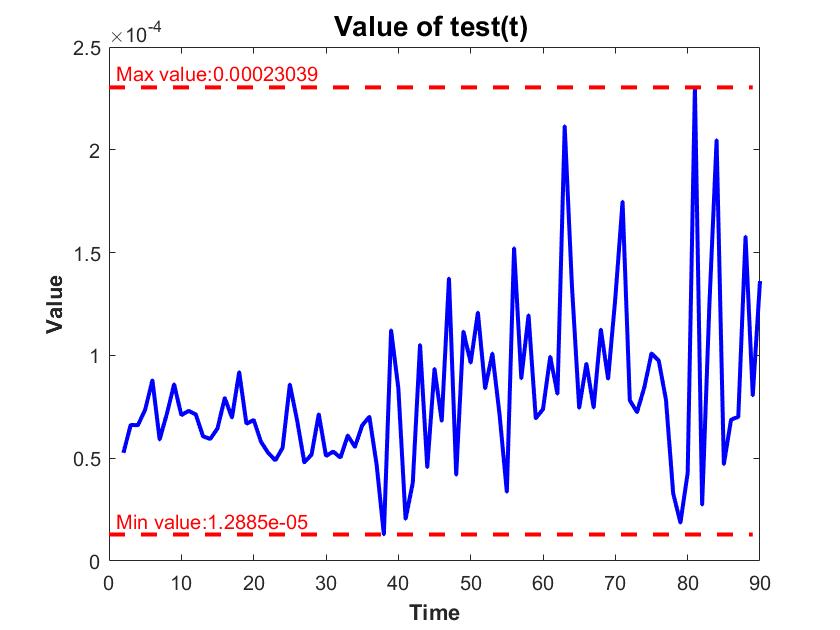}
        \caption{Experiment 1}
        \label{fig:image1}
    \end{subfigure}
    \hfill 
    \begin{subfigure}[b]{0.48\textwidth}
        \includegraphics[width=\textwidth]{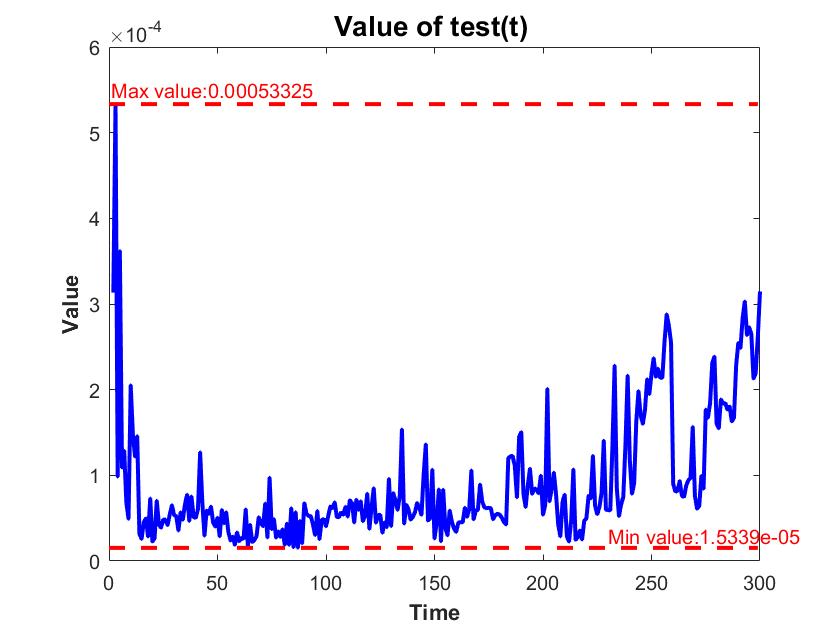}
        \caption{Experiment 2}
        \label{fig:image2}
    \end{subfigure}
    \caption{Incompressible test for Experiment 1 (left) and Experiment 2 (right)}
    \label{fig:both1}
\end{figure}

From Figure \ref{fig:both1}, it can be observed that $\frac{|\nabla \cdot u|}{|\nabla u|}$ is significantly less than 1 in the two-dimensional case. A comparison between Experiment 1 and Experiment 2 reveals that the enforcement of the incompressibility condition is less effective in Experiment 2, owing to the presence of turbulence at high Reynolds numbers. The results indicate that the Random LES method generally preserves the incompressibility assumption.

\begin{figure}[H]
    \centering
    \begin{subfigure}[b]{0.48\textwidth}
        \includegraphics[width=\textwidth]{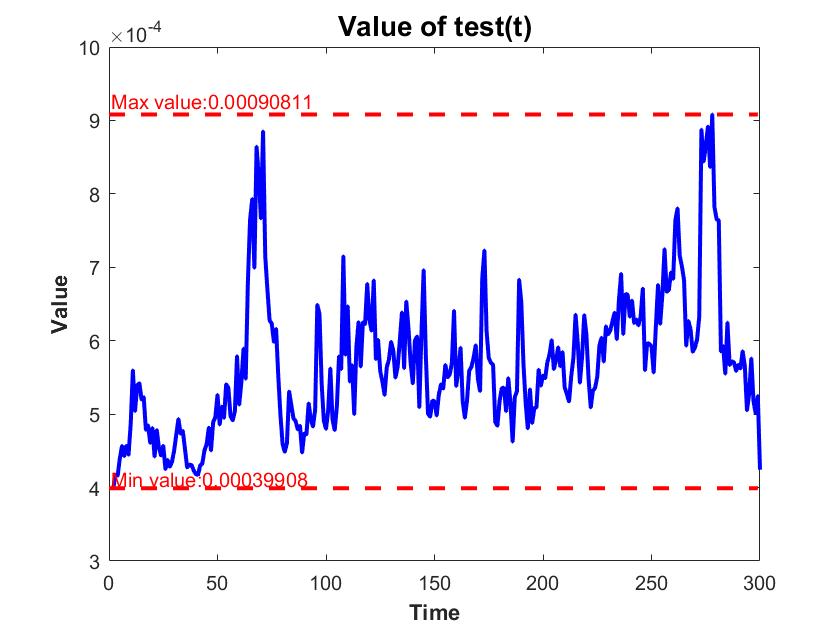}
        \caption{Experiment 3}
        \label{fig:image3}
    \end{subfigure}
    \hfill 
    \begin{subfigure}[b]{0.48\textwidth}
        \includegraphics[width=\textwidth]{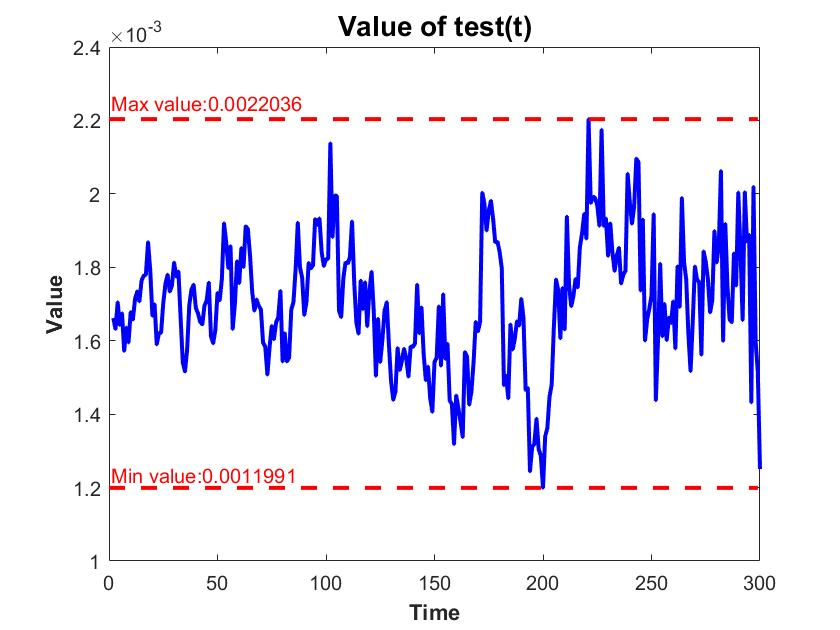}
        \caption{Experiment 4}
        \label{fig:image4}
    \end{subfigure}
    \caption{Incompressible test for Experiment 3 (left) and Experiment 4 (right)}
    \label{fig:both2}
\end{figure}

For more complex three-dimensional cases, the incompressibility assumption remains valid. Moreover, fluid systems with higher velocities tend to exhibit larger values of $\text{test}(t)$ than systems with lower velocities, consistent with the trends observed in the two-dimensional cases.

\subsubsection{Comparison with other methods}\label{comparison}
The method proposed in this study is relatively novel and offers considerable scope for further optimization. Consequently, in this subsection, we compare and analyze our method with other available methods to identify the advantages of our approach. However, the primary objective of this work is to introduce the new approach and establish its validity, rather than to compare it against fully optimized and well-established methods, so we compare our method with early versions of other methods. To provide a preliminary benchmark, two classical numerical techniques are employed: the finite difference method (FDM) and the finite volume method (FVM), both of which are widely adopted and critically important in computational fluid dynamics. The FDM approximates the derivative terms in differential equations using finite differences, constructing discrete equations from function values at grid nodes. In contrast, the FVM divides the computational domain into a set of control volumes and derives discrete equations by integrating the governing differential equations over each control volume. What's more, we also compare Random LES method with mild solution method mentioned in Section 1, which is widely studied in the theory of stochastic PDE. Using Experiment 4 as an illustrative example, we present an analysis comparing the performance of these methods. All simulations are conducted on Intel(R) i7-12700H CPU (2.30 GHz) .

For FDM method, we obey the assumption that $\Delta t \leq \frac{\Delta x}{|U_max|}$, which means that for small mesh size, we should select a small timestep $\Delta t$ to enable the stability of numerical method. For the FVM method, we use the package in python to finish the simulation. Then we add the simulation utilizing the mild solution mentioned in Section 1, the formula is:
\begin{align*}
    &u^i(x,t)-u^i_0(x)= \\
    &\int_0^t \int_D\left(u^i(y,s) u(y,s)\cdot \nabla_y \ln{h}(x,t-s,y)-{\partial\over\partial y_i} p(y,s) - F^i(y,s)\right)h(x,t-s,y)\textrm{d}y\textrm{d}s,
    \end{align*}
for $i=1,2,3$.
The velocity $u(x,t)$ is updated by the variables in time $t-1$, the calculation of $\nabla p$ is calculated by the same method (kernel method) in Random LES and the gradient of $u$ needs to be calculated by FDM method since we can not gain the explicit expression of it. We detect the stability of all the numerical methods by observing when the numerical solution blows up. The results are shown in Table \ref{test1}.

\begin{table}[H]

\centering  
\caption{Numerical instability test}
\resizebox{\textwidth}{!}{
\begin{tabular}{|Sc|Sc|Sc|Sc|Sc|}
  \hline
  &Random LES method& FDM method& FVM method & Mild solution\\
  \hline
   \makecell{$\Delta x=\Delta y=\frac{3\pi}{50}$, \\ $\Delta z=\frac{0.2\pi}{50}$ , $\Delta t=0.01$}&Stable & Explode at step 10 & Stable & Explode at step 22\\
   \hline
   \makecell{$\Delta x=\Delta y=\frac{3\pi}{50}$, \\$\Delta z=\frac{0.2\pi}{50}$ , $\Delta t=0.001$} &Stable & Explode at step 21 & Stable & Explode at step 45\\
   \hline
   \makecell{$\Delta x=\Delta y=\frac{3\pi}{100}$, \\$\Delta z=\frac{0.2\pi}{100}$ , $\Delta t=0.001$} &Stable & Explode at step 41 & Stable & Explode at step 75\\
  \hline
\end{tabular}}\label{test1}
\end{table}

It is evident that, without the incorporation of complex boundary constraints, both the FDM and mild solution methods are prone to numerical instability and potential blow-up. The latter tends to diverge more slowly, as it involves only first-order derivative approximations and avoids the computation of second-order derivatives. In contrast, the Random LES method and FVM demonstrate greater stability. While numerous researchers continue to explore various optimizations for the FDM, our analysis focuses on comparisons with its baseline versions. Theoretically, our Monte Carlo-based approach is less susceptible to numerical instability. Subsequently, we evaluated the average runtime per epoch, with the results presented in Table \ref{test2}.

\begin{table}[H]
\centering  
\caption{Average calculation time of each epoch}

\begin{tabular}{|Sc|Sc|Sc|Sc|Sc|}
  \hline
  &Random LES method& FDM method& FVM method &Mild solution\\
  \hline
   \makecell{$\Delta x=\Delta y=\frac{3\pi}{50}$, \\ $\Delta z=\frac{0.2\pi}{50}$, $\Delta t=0.01$}  & 4.69 seconds & 0.449 seconds & 6.5 seconds &26.5 seconds\\
   \hline
   \makecell{$\Delta x=\Delta y=\frac{3\pi}{50}$, \\$\Delta z=\frac{0.2\pi}{50}$ , $\Delta t=0.001$} &7.71 seconds & 1.143 seconds & 13 seconds &73.4 seconds\\
   \hline
   \makecell{$\Delta x=\Delta y=\frac{3\pi}{100}$, \\$\Delta z=\frac{0.2\pi}{100}$ , $\Delta t=0.001$} &16.5 seconds & 11.54 seconds & 78.7 seconds &165.4 seconds\\
  \hline
\end{tabular}\label{test2}
\end{table}

In terms of computational time, the FDM, benefiting from its integration into existing software packages, generally achieves faster computations compared to our approach. Nonetheless, we argue that the computational cost of our method remains well within acceptable limits. More importantly, as grid resolution increases, FDM's computational time escalates significantly and is accompanied by an increased risk of numerical instability. In contrast, our method exhibits stable computational scalability without these limitations. The mild solution method requires the longest computational time among the compared methods, with runtime increasing substantially as mesh resolution becomes finer. Because the kernel term and the other terms are time-reverse, which indicates that we need to recalculate integral from 0 to $t$ in every iterative step in mild solution method, which is very expensive in calculation. Finally, we present a theoretical analysis of the computational complexity of the proposed numerical scheme.

\begin{table}[H]
\centering  
\caption{Analysis of complexity}
\resizebox{\textwidth}{!}{
\begin{tabular}{|c|c|c|c|c|}
  \hline
  Updating variable &Random LES method& FDM method& FVM method & Mild solution\\
  \hline
   Brownian particle $Y_t$ & Using value at $t-1$  & \textbackslash & \textbackslash & \textbackslash\\
   \hline
   Velocity $U_t$ &Using 1-order derivative & Using 2-order derivative & Using 2-order derivative & Using 1-order derivative\\
   \hline
   Pressure $P$ & \textbackslash & Using 2-order derivative& Using 2-order derivative & \textbackslash\\
   \hline
   Gradient of $P$ & Using 1-order derivative & \textbackslash & \textbackslash & Using 1-order derivative\\
  \hline
\end{tabular}}\label{test3}
\end{table}

We observe that the numerical scheme of the Random LES method requires only the computation of first-order derivatives, while both the FDM and FVM involve second-order derivatives. Theoretically, the Random LES approach offers distinct advantages: calculating second-order derivatives not only leads to greater error accumulation but also increases the risk of numerical instability and entails higher computational cost, particularly in high-dimensional settings. Although the mild solution method relies solely on first-order derivatives, the term $\nabla u$ cannot be explicitly expressed. Therefore, we still need to employ additional numerical techniques to approximate it.

According to the results in Table \ref{test1}, \ref{test2}, \ref{test3} and the corresponding analysis, compared to alternative approaches, the Random LES method generally demonstrates superior stability, acceptable computational efficiency, and relatively lower theoretical complexity. A key advantage of our approach lies in its ability to reveal diverse potential structures of the fluid system through individual simulations. Since the exact morphology of the fluid cannot be perfectly determined through computation, conventional deterministic methods that produce fixed solutions may fail to capture significant fluid behaviors emerging from our probabilistic framework. Our Random LES method, which is based on explicit variable representation and probabilistic format, significantly improves the accuracy and stability of numerical experiments without substantially increasing computational time.

\newpage

\appendix
\renewcommand{\thesection}{\Alph{section}}
\newcommand{\sectionname}{Section}
\renewcommand{\sectionname}{Appendix}

\makeatletter
\renewcommand{\@seccntformat}[1]{\sectionname~\thesection: }
\makeatother
\section{Fluid flows represented by Brownian fluid particles}\label{representation}
In this section we establish the technical facts about divergence-free
vector fields, and their Taylor's diffusion, which are used in the
paper.

The following convention will be applied throughout the paper unless
otherwise specified. Suppose $b(x,t)$ be a time-dependent vector
field on $\mathbb{R}_{+}^{d}$ such that $\nabla\cdot b=0$ and $b(x,t)=0$
for $x=(x_{1},\cdots,x_{d})$ with $x_{d}=0$, then $b(x,t)$ is extended
to the whole space $\mathbb{R}^{d}$ by reflection: $b^{i}(\bar{x},t)=b^{i}(x,t)$
for $i=1,\cdots,d-1$ and $b^{d}(\bar{x},t)=-b^{d}(x,t)$ for any
$x\in\mathbb{R}^{d}$ and $t\geq0$, where $x\mapsto\bar{x}=(x_{1},\cdots,x_{d-1},-x_{d})$
is the reflection about $x_{d}=0$. The divergence-free feature of
$u$ is retained though in weak sense, that is, $\nabla\cdot b=0$
in distribution on $\mathbb{R}^{d}$, which implies that the $L^{2}$-adjoint
of the (forward) heat operator $L_{b}-\frac{\partial}{\partial t}$
coincides with the (backward) heat operator $L_{-b}+\frac{\partial}{\partial t}$.
Here, given a time-dependent vector field $b(x,t)$, $L_{b}=\nu\Delta+b\cdot\nabla$
denotes the time-dependent elliptic operator of second-order in $\mathbb{R}^{d}$.
Following an idea of Taylor \citep{Taylor1921} that determining a
vector field $b(x,t)$ is equivalent to the description of `imaginary'
fluid particles with the velocity $b(x,t)$, we may consider the diffusions
defined by It\^o's stochastic differential equation 
\begin{equation}
\textrm{d}X=b(X,t)\textrm{d}t+\sqrt{2\nu}\textrm{d}B,\label{b-SDE-1}
\end{equation}
where $B$ is a Brownian motion on some probability space. The weak
solution (cf. \citep{StroockVaradhan} and \citep{Ikeda1989}) of
(\ref{b-SDE-1}) defines the diffusion with the infinitesimal generator
$L_{b}$, cf. \citep{Friedman1964}. Let $p_{b}(s,x;t,y)$ and $p_{b}^{+}(s,x;t,y)$
(for $t>s\geq0$) be the transition probability density function of
the $L_{b}$-diffusion and, respectively, of the $L_{b}$-diffusion
stopped on leaving the region $\mathbb{R}_{+}^{d}$. Formally $p_{b}(s,x;t,y)$
is the probability that the diffusion $X_{t}$ hits $y$ given that
$X_{s}=x$, and similarly $p_{b}^{+}(s,x;t,y)$ the probability that
$X_{t}$ hits $y$ before leaving the domain $\mathbb{R}_{+}^{d}$
given that $X_{s}=x$. Then it holds that 
\begin{equation}
p_{b}^{+}(s,x;t,y)=p_{b}(s,x;t,y)-p_{b}(s,x;t,\bar{y})\quad\forall x,y\in\mathbb{R}_{+}^{d}.\label{pD-p}
\end{equation}
Given $\eta\in\mathbb{R}^{d}$, the distribution of $X$, a weak solution
to (\ref{b-SDE-1}) that $X_{0}=\eta$, is denoted by $\mathbb{P}^{\eta}$,
which is a probability measure on the space $C$({[}0,$\infty),\mathbb{R}^{d})$
of continuous paths. Consider for every $\xi$ , It\^o's stochastic
differential equation 
\begin{equation}
\textrm{d}X_{t}=b(X_{t},t)\textrm{d}t+\sqrt{2\nu}\textrm{d}B_{t},\quad X_{0}=\xi,\label{b-SDE}
\end{equation}
where $B_{t}$ is $d$-dimensional Brownian motion. In addition to
It\^o's SDE formulation, we shall use the weak solution formulation
as well, cf. \citep{StroockVaradhan}. Let $\varOmega=C([0,\infty),\mathbb{R}^{d})$
be the continuous path space in $\mathbb{R}^{d}$, and $X=(X_{t})_{t\geq0}$
denote the coordinate process, that is, for each $t\geq0$, $X_{t}:\varOmega\mapsto\mathbb{R}^{d}$,
which sends each path $\psi\in\varOmega$ to $\psi(t)$. For simplicity,
$X_{t}$ shall be written as $\psi(t)$ if no confusion may arise
from the context. Let $\mathcal{F}_{t}^{0}=\sigma\{X_{s}:s\leq t\}$
be the smallest $\sigma$-algebra on $\varOmega$ such that $X_{s}$
are measurable for all $s\leq t$, and $\mathcal{F}^{0}=\sigma\{X_{s}:s<\infty\}$.
Then $\mathcal{F}^{0}=\mathcal{B}(\varOmega)$ the Borel $\sigma$-algebra
on $\varOmega$ generated by the uniform convergence over any bounded
subset of $[0,\infty)$.

For simplicity let us assume that $b(x,t)$ is bounded and jointly
Borel measurable. According to \citep{StroockVaradhan}, for each
$\xi\in\mathbb{R}^{d}$ and each $\tau\geq0$, there is a unique probability
measure $\mathbb{P}_{b}^{\xi,\tau}$ on $(\varOmega,\mathcal{F}^{0})$
such that $\mathbb{P}_{b}^{\xi,\tau}\left[X_{s}=\xi\textrm{ for }s\leq\tau\right]=1$,
and 
\[
M_{t}^{[f]}=f(X_{t},t)-f(X_{\tau},\tau)-\int_{\tau}^{t}(L_{b}f)(X_{s},s)\textrm{d}s,
\]
is a martingale (for $t\geq\tau$ and under the probability $\mathbb{P}_{b}^{\xi,\tau}$)
for every $f\in C_{b}^{2,1}(\mathbb{R}^{d}\times[0,\infty))$. For
simplicity $\mathbb{P}_{b}^{\xi,0}$ is denoted by $\mathbb{P}_{b}^{\xi}$.
The probability measure $\mathbb{P}_{b}^{\xi,\tau}$ is called the
$L_{b}$-diffusion, or Taylor's diffusion, started from $\xi$ at
instance $\tau$. $L_{b}$ is also called the infinitesimal generator
of the diffusion $\mathbb{P}_{b}^{\xi,\tau}$. Since $L_{b}$ is uniformly
elliptic, the transition probability $P_{b}(\tau,\xi;t,\textrm{d}x)$
of the $L_{b}$-diffusion, which is the distribution of $X_{t}$ under
$\mathbb{P}_{b}^{\xi,\tau}$, where $t>\tau\geq0$, has a positive
and H\"older's continuous density denoted by $p_{b}(\tau,\xi;t,x)$
(for $t>\tau\geq0$, $\xi,x\in\mathbb{R}^{d}$). That is, $p_{b}(\tau,\xi;t,x)\textrm{d}x=\mathbb{P}_{b}^{\xi,\tau}\left[X_{t}\in\textrm{d}x\right]$.

Let $T>0$ be given. The conditional distribution $\mathbb{P}_{b}^{\xi}\left[\left.\cdot\right|X_{T}=\eta\right]$
can be constructed as the following. Let 
\[
q_{b}(s,x;t,y)=\frac{p_{b}(s,x;t,y)p_{b}(t,y;T,\eta)}{p_{b}(s,x;T,\eta)},
\]
for $0\leq s<t<T$ and $x,y\in\mathbb{R}^{d}$, which is a transition
probability density function. The diffusion associated with the transition
probability density function $q_{b}$ started from $\xi$ at $\tau=0$
shall be denoted by $\mathbb{P}^{\xi,0\rightarrow\eta,T}$, which
shall be consider as the probability measure $\varOmega=C([0,T],\mathbb{R}^{d})$.

Let $b^{T}(x,t)=b(x,(T-t)^{+})$ and $\tau_{T}$ be the time reversal
operation on $\varOmega$, that is, $\psi\circ\tau_{T}(t)=\psi(T-t)$
for $t\in[0,T]$. It can be verified that 
\begin{equation}
p_{-b^{T}}(s,x;t,y)=p_{b}(T-t,y;T-s,x),\label{div-b-01}
\end{equation}
for $0\leq s<t\leq T$, and $x,y\in\mathbb{R}^{d}$. The duality of
conditional diffusion laws has been established in \citep{Qian2022}. 
\begin{lem}
\label{lem31}Suppose $b(x,t)$ is divergence-free, that is, $\nabla\cdot b=0$
(for every $t$) in the sense of distribution on $\mathbb{R}^{d}$.
Then 
\[
\mathbb{P}_{b}^{\xi,0\rightarrow\eta,T}=\mathbb{P}_{-b^{T}}^{\eta,0\rightarrow\xi,T}\circ\tau_{T}
\]
for every $\xi,\eta\in\mathbb{R}^{d}$. 
\end{lem}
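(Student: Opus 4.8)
The plan is to establish equality of these two probability measures on $\Omega=C([0,T],\mathbb{R}^{d})$ by matching all of their finite-dimensional distributions. Since $\mathcal{F}^{0}=\mathcal{B}(\Omega)$ is generated by the cylinder sets, which form a $\pi$-system, a monotone-class argument reduces the claim to showing that, for every choice of times $0<t_{1}<\cdots<t_{n}<T$, the joint law of $(X_{t_{1}},\dots,X_{t_{n}})$ is the same under both measures. Because the transition densities $p_{b}$ are positive and continuous (as recorded earlier), it suffices to verify pointwise equality of the corresponding joint densities.

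First I would write down the finite-dimensional density of the bridge $\mathbb{P}_{b}^{\xi,0\rightarrow\eta,T}$. By construction this is the Markov process with transition density $q_{b}$, so its joint density at $(x_{1},\dots,x_{n})$ equals $\prod_{k=1}^{n}q_{b}(t_{k-1},x_{k-1};t_{k},x_{k})$ with the convention $t_{0}=0$, $x_{0}=\xi$. Substituting $q_{b}(s,x;t,y)=p_{b}(s,x;t,y)p_{b}(t,y;T,\eta)/p_{b}(s,x;T,\eta)$, the factors $p_{b}(t_{k},x_{k};T,\eta)$ telescope and leave the Doob-transform form
\[
\frac{1}{p_{b}(0,\xi;T,\eta)}\prod_{k=0}^{n}p_{b}(t_{k},x_{k};t_{k+1},x_{k+1}),\qquad t_{n+1}=T,\ x_{n+1}=\eta.
\]

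Next I would carry out the same computation for $\mathbb{P}_{-b^{T}}^{\eta,0\rightarrow\xi,T}$ and then push it forward by $\tau_{T}$. The joint density of the $-b^{T}$-bridge at times $s_{1}<\cdots<s_{n}$ has the identical telescoped shape with $p_{b}$ replaced by $p_{-b^{T}}$, the roles of $\xi$ and $\eta$ exchanged, and normaliser $p_{-b^{T}}(0,\eta;T,\xi)$. Since $\tau_{T}$ is an involution, $\mathbb{P}_{-b^{T}}^{\eta,0\rightarrow\xi,T}\circ\tau_{T}$ is the law of the reversed process $X_{T-\,\cdot}$, so its joint density at $t_{1}<\cdots<t_{n}$ is obtained by evaluating the $-b^{T}$-bridge density at the reversed times $s_{j}=T-t_{n+1-j}$ with reversed labels $z_{j}=x_{n+1-j}$. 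Inserting the duality identity (\ref{div-b-01}), namely $p_{-b^{T}}(s,x;t,y)=p_{b}(T-t,y;T-s,x)$, into each transition factor and into the normaliser, and reindexing $j\mapsto n+1-j$, one finds that each factor $p_{-b^{T}}(s_{j},z_{j};s_{j+1},z_{j+1})$ turns back into $p_{b}(t_{k},x_{k};t_{k+1},x_{k+1})$ while $p_{-b^{T}}(0,\eta;T,\xi)$ becomes $p_{b}(0,\xi;T,\eta)$. The result is term-by-term identical to the bridge density above, which closes the argument.

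The crux of the proof, and the place where the divergence-free hypothesis genuinely enters, is the duality identity (\ref{div-b-01}): it is precisely this relation that certifies $p_{-b^{T}}$ to be the original density read backwards, and it holds because $\nabla\cdot b=0$ forces $L_{b}^{*}=L_{-b}$, making $p_{b}$ doubly stochastic (integrating to one in \emph{both} spatial arguments) so that the time-reversed kernel is again a bona fide transition density. I expect the main care to lie not in any deep estimate but in the bookkeeping of the reversal: one must check that the endpoints $t_{0}=0,\,x_{0}=\xi$ and $t_{n+1}=T,\,x_{n+1}=\eta$ are transported correctly under $s_{j}=T-t_{n+1-j}$, and that the normalising constants coincide. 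Once these indices are tracked faithfully the remaining manipulations are routine.
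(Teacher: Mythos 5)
Your proposal is correct, and it is worth noting that the paper itself contains no proof of Lemma \ref{lem31}: it states the reversal identity (\ref{div-b-01}) as something that ``can be verified'' and then defers the lemma to the citation \citep{Qian2022}. Your argument therefore supplies the proof that the citation stands in for, and it is the natural one. The bookkeeping all checks out: the telescoping of the Doob $h$-transform factors gives the bridge density $p_{b}(0,\xi;T,\eta)^{-1}\prod_{k=0}^{n}p_{b}(t_{k},x_{k};t_{k+1},x_{k+1})$; the substitution $s_{j}=T-t_{n+1-j}$, $z_{j}=x_{n+1-j}$ extends consistently to the endpoint conventions ($s_{0}=0$, $z_{0}=\eta$ match $t_{n+1}=T$, $x_{n+1}=\eta$, and symmetrically at the other end); each factor $p_{-b^{T}}(s_{j},z_{j};s_{j+1},z_{j+1})$ becomes $p_{b}(t_{n-j},x_{n-j};t_{n+1-j},x_{n+1-j})$ under (\ref{div-b-01}); and the normalisers agree since $p_{-b^{T}}(0,\eta;T,\xi)=p_{b}(0,\xi;T,\eta)$. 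The $\pi$-system reduction is legitimate because the paper records $\mathcal{F}^{0}=\mathcal{B}(\varOmega)$, and your remark that $\tau_{T}$ is an involution correctly disposes of any pushforward-versus-pullback ambiguity in the notation $\mathbb{P}\circ\tau_{T}$. The one point you leave at the same level of rigour as the paper is (\ref{div-b-01}) itself: you use it as given (as the paper does), while correctly identifying that $\nabla\cdot b=0$ is what makes it true. For full self-containment one would add the two-line verification: divergence-freeness gives $L_{b}^{*}=L_{-b}$, so $(t,y)\mapsto p_{b}(T-t,y;T-s,x)$ satisfies the forward equation of the $L_{-b^{T}}$-diffusion with the correct initial delta, and equals $p_{-b^{T}}(s,x;t,y)$ by uniqueness of the fundamental solution for a uniformly elliptic operator with bounded drift. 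Granting the displayed identity, nothing is missing from your argument.
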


From now on, we assume that $b(x,t)$ for $x\in\mathbb{R}_{+}^{d}$
and $t\geq0$ be a time-dependent vector field on $D$, satisfying
that $b(x,t)=0$ for $x\in\partial\mathbb{R}_{+}^{d}$ (the non-slip
condition). Suppose that $b(x,t)$ is differentiable up to the boundary
$\partial\mathbb{R}_{+}^{d}$ and is bounded. For each $t$, the vector
field $b(x,t)$ is extended to the whole space $\mathbb{R}^{d}$ via
the reflection about the hyperspace $x_{d}=0$, so that 
\begin{equation}
b^{i}(x,t)=b^{i}(\bar{x},t)\quad\textrm{ for }i=1,\ldots,d-1\textrm{ and }b^{d}(x,t)=-b^{d}(\bar{x},t).\label{v1-01}
\end{equation}
Then $\overline{b(x,t)}=b(\bar{x},t)$ for all $x\in\mathbb{R}^{d}$
and $t\geq0$. We also assume that $b(x,t)$ is divergence-free on
$\mathbb{R}_{+}^{d}$. Then $\nabla\cdot b=0$ in the sense of distribution
on $\mathbb{R}^{d}$. 
\begin{thm}
\label{thm23-5} Let $\varPsi$ be a solution to the parabolic equation
\begin{equation}
\left(L_{-b}-\frac{\partial}{\partial t}\right)\varPsi+g=0\quad\textrm{ in }\mathbb{R}_{+}^{d},\label{par-01-1}
\end{equation}
subject to the non-slip condition that $\varPsi(x,t)=0$ for $x=(x_{1},\cdots,x_{d})$
with $x_{d}=0$. Then
\begin{align}
\varPsi(\xi,t) & =\int_{\mathbb{R}_{+}^{d}}\left(p_{b}(0,\eta;t,\xi)-p_{b}(0,\bar{\eta};t,\xi)\right)\varPsi_{0}(\eta)\textrm{d}\eta\nonumber \\
 & +\int_{0}^{t}\int_{\mathbb{R}_{+}^{d}}\mathbb{P}_{b}^{\eta,0\rightarrow\xi,t}\left[1_{\{s>\gamma_{t}(X^{\eta})\}}g(\psi(s),s)\right]p_{b}(0,\eta;t,\xi)\textrm{d}\eta\textrm{d}s,\label{W-aa2-1-1}
\end{align}
for $t>0$ and $\xi\in\mathbb{R}_{+}^{d}$, where $\varPsi_{0}=\varPsi(\cdot,0)$,
$\zeta(\psi)=\inf\{s:\psi(s)\notin\mathbb{R}_{+}^{d}\}$ and $\gamma_{t}(\psi)$
denotes $\sup\{s\in(0,t):\psi(s)\in\mathbb{R}_{+}^{d}\}$ respectively. 
\end{thm}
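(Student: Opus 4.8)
The plan is to reduce the boundary problem to the whole-space representation of Theorem~\ref{thm-space} by the method of images, and then to convert the resulting killed Green-function formula into the probabilistic bridge form using the time-reversal duality of Lemma~\ref{lem31}.

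First I would dispose of the whole-space case (Theorem~\ref{thm-space}), which is the engine of the argument. Since $\nabla\cdot b=0$, the transition density $p_b(0,\eta;t,\xi)$, viewed as a function of $(\xi,t)$, solves the forward Kolmogorov (Fokker--Planck) equation $\partial_t p_b=L_{-b,\xi}p_b$ with initial datum $\delta_\eta$: indeed the Fokker--Planck operator $L_b^\ast\phi=\nu\Delta\phi-\nabla\cdot(b\phi)$ reduces to $L_{-b}\phi$ precisely because $\nabla\cdot b=0$, so $p_b(0,\eta;t,\xi)$ is the fundamental solution of the operator $\partial_t-L_{-b}$ appearing in (\ref{par-01-1}). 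Duhamel's principle then yields $\varPsi(\xi,t)=\int p_b(0,\eta;t,\xi)\varPsi_0(\eta)\,d\eta+\int_0^t\!\!\int p_b(s,z;t,\xi)g(z,s)\,dz\,ds$, and a Bayes/Markov computation rewrites the source integrand as a bridge conditional expectation, using the divergence-free normalisation $\int_{\mathbb{R}^d}p_b(0,\eta;s,z)\,d\eta=1$ to reinstate the integration over the starting point $\eta$.

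Next I would extend $b$, $\varPsi$ and $g$ across $\{x_d=0\}$ by the reflection fixed in the text, so that $\nabla\cdot b=0$ holds in the distributional sense on all of $\mathbb{R}^d$ and the extended $\varPsi$ is a genuine solution of (\ref{par-01-1}) on $\mathbb{R}^d$ with the reflected source; the parity bookkeeping (both $\Delta$ and $b\cdot\nabla$ preserve the odd parity of $\varPsi$ forced by the non-slip condition) is routine. Applying the whole-space formula to the extended data and splitting each $\eta$- and $z$-integral over $\mathbb{R}_+^d$ and $\mathbb{R}_-^d$, the image relation (\ref{pD-p}) together with the reflection-equivariance $p_b(0,\bar\eta;t,\xi)=p_b(0,\eta;t,\bar\xi)$ collapses the two halves into the killed kernels: the initial term becomes $\int_{\mathbb{R}_+^d}(p_b(0,\eta;t,\xi)-p_b(0,\bar\eta;t,\xi))\varPsi_0\,d\eta$, exactly as claimed, and the source term becomes $\int_0^t\!\!\int_{\mathbb{R}_+^d}p_b^+(s,z;t,\xi)g(z,s)\,dz\,ds$.

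The remaining, and hardest, step is to recast this killed source term in the stated bridge form carrying the indicator $1_{\{s>\gamma_t(X^\eta)\}}$. Here I would invoke Lemma~\ref{lem31}: under the identity $\mathbb{P}_b^{\eta,0\to\xi,t}=\mathbb{P}_{-b^t}^{\xi,0\to\eta,t}\circ\tau_t$ the event ``the path stays in $\mathbb{R}_+^d$ throughout $[s,t]$'', which is what the kernel $p_b^+(s,z;t,\xi)$ encodes, is transported to a first-exit event for the dual $(-b^t)$-bridge run from $\xi$, and the forward last-passage time $\gamma_t$ is precisely the time-reversal of that dual first-exit time $\zeta$ (cf. the appearance of $\zeta(X^\eta\circ\tau_t)$ in the proof of the filtered representation). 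I expect the main obstacle to be the careful bookkeeping that accompanies this conversion: justifying the Markov decomposition of the bridge at time $s$, and checking that the half-space integration in $\eta$ combines with the killed densities on $[0,s]$ to produce the correct normalisation. This is exactly the point where the divergence-free property and the duality identity (\ref{div-b-01}) must be used in tandem, since the naive forward computation leaves a spurious factor $\int_{\mathbb{R}_+^d}p_b(0,\eta;s,z)\,d\eta$ that is only reconciled by passing to the normalised dual diffusion; one must also verify the measurability of $\gamma_t$ and $\zeta$ so that Fubini applies. Everything else is routine once this identification of the survival event under time reversal is in place.
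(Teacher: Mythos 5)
Your proposal is correct in substance but takes a genuinely different route from the paper. You argue forwards: establish the whole-space formula by identifying $p_b(0,\eta;t,\xi)$ as the fundamental solution of $\partial_t-L_{-b}$ (valid since $\nabla\cdot b=0$), extend $b$, $\varPsi$, $g$ by reflection, collapse the two half-space integrals via the image relation (\ref{pD-p}) and the equivariance $p_b(0,\bar\eta;t,\xi)=p_b(0,\eta;t,\bar\xi)$ into the killed kernel $p_b^{+}(s,z;t,\xi)$, and finally disintegrate the forward bridge at time $s$. The paper instead never extends $\varPsi$ at all: it runs the time-reversed diffusion $\mathrm{d}Y_t=-b(Y_t,T-t)\,\mathrm{d}t+\sqrt{2\nu}\,\mathrm{d}B_t$ from $\xi$, applies It\^o's formula to the stopped process $Z_t=\varPsi(Y_{t\wedge T_\xi},T-t)$ --- the non-slip condition makes $Z_t=1_{\{t<T_\xi\}}\varPsi(Y_t,T-t)$, which is what kills the boundary contribution --- takes expectations, and only afterwards converts the killed dual densities and the dual bridge into forward quantities via (\ref{div-b-01}) and Lemma \ref{lem31}. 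The trade-off: the paper's stopping argument avoids the one step in your plan that is not quite routine, namely that your odd extension $\tilde\varPsi$ is only $C^1$ across $\{x_d=0\}$ (the second normal derivative jumps unless $\partial_d^2\varPsi$ vanishes on the wall), so invoking the whole-space representation for $\tilde\varPsi$ needs an It\^o--Krylov or mollification patch; conversely, your route makes the final bridge step simpler than the paper's, since from the killed-kernel form one only needs the Markov disintegration of the bridge at time $s$ (density $p_b(0,\eta;s,z)\,p_b(s,z;t,\xi)/p_b(0,\eta;t,\xi)$ times conditional survival $p_b^{+}/p_b$) together with the Lebesgue-invariance identity $\int_{\mathbb{R}^d}p_b(0,\eta;s,z)\,\mathrm{d}\eta=1$, and Lemma \ref{lem31} is then not strictly needed. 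Finally, the ``spurious factor'' you flag is real and is resolved exactly as you suspect: the $\eta$-integral in the source term must run over all of $\mathbb{R}^d$ --- as it does in the paper's own concluding display (\ref{WW-002}) --- and the restriction to $\mathbb{R}_{+}^{d}$ in the theorem's statement is a slip of the paper (as is the $\in$ in the definition of $\gamma_t$, which should read $\notin$ so that $\{s>\gamma_t\}$ is the survival event on $[s,t]$); your reading of the indicator as the time-reversal of the dual first-exit event matches the paper's proof.
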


\begin{proof}
Let $D=\mathbb{R}_{+}^{d}$. The proof relies on the duality Lemma
\ref{lem31}, established in \citep{Qian2022}. Let $T>0$ and
$\xi\in D$ be any but fixed. Let $b^{T}(x,t)=b(x,(T-t)^{+})$, and
$Y$ denote the weak solution to SDE 
\begin{equation}
\textrm{d}Y_{t}=-b(Y_{t},T-t)\textrm{d}t+\sqrt{2\nu}\textrm{d}B_{t},\quad Y_{0}=\xi,\label{ba-s01}
\end{equation}
whose infinitesimal generator is $L_{-b^{T}}$. Then 
\begin{equation}
Y_{t}=\xi-\int_{0}^{t\wedge T}b(Y_{s},T-s)\textrm{d}s+\sqrt{2\nu}\int_{0}^{t\wedge T}\textrm{d}B_{s}\quad\textrm{ for all }t\geq0.\label{sint-01}
\end{equation}
Let $T_{\xi}=\inf\left\{ t\geq0:Y_{t}\notin D\right\} $. Then 
\begin{equation}
Y_{t\wedge T_{\xi}}=\xi-\int_{0}^{t}1_{\left\{ s<T\wedge T_{\xi}\right\} }b(Y_{s},T-s)\textrm{d}s+\sqrt{2\nu}\int_{0}^{t}1_{\left\{ s<T\wedge T_{\xi}\right\} }\textrm{d}B_{s},\label{Y-b}
\end{equation}
for all $t\geq0$. Let 
\[
Z_{t}=\varPsi(Y_{t\wedge T_{\xi}},T-t)=1_{\{t<T_{\xi}\}}\varPsi(Y_{t},T-t)\quad\textrm{ for }t\leq T,
\]
where the second equality follows from the non-slip condition: $\varPsi$
vanishes along the boundary $\partial D$. According to Itô's formula,
\[
\textrm{d}Z_{t}=\nabla\varPsi(Y_{t\wedge T_{\xi}},T-t)\cdot\textrm{d}Y_{t\wedge T_{\xi}}-\frac{\partial\varPsi}{\partial t}(Y_{t\wedge T_{\xi}},T-t)\textrm{d}t+\nu1_{\left\{ t<T\wedge T_{\xi}\right\} }\Delta\varPsi(Y_{t},T-t)\textrm{d}t.
\]
Using the non-slip condition again: $\varPsi$ vanishes identically
on $\partial D$, so that 
\[
\frac{\partial\varPsi}{\partial t}(Y_{t\wedge T_{\xi}},T-t)=1_{\left\{ t<T_{\xi}\right\} }\frac{\partial\varPsi}{\partial t}(Y_{t},T-t)\quad\textrm{ for }t\leq T.
\]
Therefore 
\begin{align}
\textrm{d}Z_{t} & =\sqrt{2\nu}1_{\left\{ t<T_{\xi}\right\} }\nabla\varPsi(Y_{t},T-t)\cdot\textrm{d}B_{t}\nonumber \\
 & +1_{\left\{ t<T_{\xi}\right\} }\left(\nu\Delta-b\cdot\nabla-\frac{\partial}{\partial t}\right)\varPsi(Y_{t},T-t)\textrm{d}t,\label{Yj-06}
\end{align}
for $t\leq T$. Since $\varPsi$ solves the parabolic equations, so
that 
\begin{align}
Z_{T\wedge T_{\xi}} & =\varPsi(\xi,T)-\int_{0}^{T\wedge T_{\xi}}1_{\{t<T_{\xi}\}}g(Y_{t},T-t)\textrm{d}t\nonumber \\
 & +\sqrt{2\nu}\int_{0}^{T\wedge T_{\xi}}1_{\{t<T_{\xi}\}}\nabla\varPsi(Y_{t},T-t)\cdot\textrm{d}B.\label{ba-0051}
\end{align}
Taking expectation both sides we obtain 
\begin{equation}
\varPsi(\xi,T)=\mathbb{E}\left[\varPsi(Y_{T},0)1_{\{T<T_{\xi}\}}\right]+\mathbb{E}\left[\int_{0}^{T}1_{\{t<T_{\xi}\}}g(Y_{t},T-t)\textrm{d}t\right].\label{Wi-004}
\end{equation}
The first term $J_{1}=\mathbb{E}\left[\varPsi_{0}(Y_{T})1_{\{T<T_{\xi}\}}\right]$
on the right hand side may be written as 
\begin{align*}
J_{1} & =\int_{D}\varPsi_{0}(\eta)p_{-b^{T}}^{D}(0,\xi;T,\eta)\textrm{d}\eta\\
 & =\int_{D}\varPsi_{0}(\eta)\left(p_{-b^{T}}(0,\xi;T,\eta)-p_{-b^{T}}(0,\xi;T,\bar{\eta})\right)\textrm{d}\eta\\
 & =\int_{D}\left(p_{b}(0,\eta;T,\xi)-p_{b}(0,\bar{\eta};T,\xi)\right)\varPsi_{0}(\eta)\textrm{d}\eta.
\end{align*}
The second term $J_{2}=\mathbb{E}\left[\int_{0}^{T}g(Y_{t},T-t)1_{\{t<T_{\xi}\}}\textrm{d}t\right]$
on the right hand side can be treated similarly 
\begin{align*}
J_{2} & =\int_{0}^{T}\int_{\mathbb{R}^{3}}\mathbb{E}\left[\left.1_{\{t<T_{\xi}\}}g(Y_{t},T-t)\right|Y_{T}=\eta\right]p_{-b_{T}}(0,\xi;T,\eta)\textrm{d}t\\
 & =\int_{0}^{T}\int_{\mathbb{R}^{3}}\mathbb{P}_{-b^{T}}^{\xi,0\rightarrow\eta,T}\left[1_{\{t<\zeta(\psi)\}}g(\psi(t),T-t)\right]p_{b}(0,\eta;T,\xi)\textrm{d}\eta\textrm{d}t\\
 & =\int_{0}^{T}\int_{\mathbb{R}^{3}}\mathbb{P}_{b}^{\eta,0\rightarrow\xi,T}\left[1_{\{t<\zeta(\psi\circ\tau_{T})\}}g(\psi(T-t),T-t)\right]p_{b}(0,\eta;T,\xi)\textrm{d}\eta\textrm{d}t\\
 & =\int_{0}^{T}\int_{\mathbb{R}^{3}}\mathbb{P}_{b}^{\eta,0\rightarrow\xi,T}\left[1_{\{T-t<\zeta(\psi\circ\tau_{T})\}}g(\psi(t),t)\right]p_{b}(0,\eta;T,\xi)\textrm{d}\eta\textrm{d}t,
\end{align*}
where we have used the fact that, since $\nabla\cdot b=0$, $p_{-b_{T}}(0,\xi,T,\eta)$
coincides with $p_{b}(0,\eta,T,\xi)$. Therefore 
\begin{align}
\varPsi(\xi,T) & =\int_{D}\left(p_{b}(0,\eta;T,\xi)-p_{b}(0,\bar{\eta};T,\xi)\right)\varPsi_{0}(\eta)\textrm{d}\eta.\nonumber \\
 & +\int_{0}^{T}\int_{\mathbb{R}^{3}}\mathbb{P}_{b}^{\eta,0\rightarrow\xi,T}\left[1_{\{T-t<\zeta(\psi\circ\tau_{T})\}}g(\psi(t),t)\right]p_{b}(0,\eta;T,\xi)\textrm{d}\eta\textrm{d}t.\label{WW-002}
\end{align}
\end{proof}
By using a similar but much simpler proof, we also have the following
representation.
\begin{thm}
Let $\varPsi$ be a solution to the parabolic equation
\begin{equation}
\left(L_{-u}-\frac{\partial}{\partial t}\right)\varPsi+g=0\quad\textrm{ in }\mathbb{R}^{d}.\label{par-01-1-1}
\end{equation}
Then 
\begin{align}
\varPsi(\xi,t) & =\int_{\mathbb{R}^{d}}p_{b}(0,\eta;t,\xi)\varPsi(\eta,0)\mathrm{d}\eta\nonumber \\
 & +\int_{0}^{t}\int_{\mathbb{R}^{d}}\mathbb{P}_{b}^{\eta,0\rightarrow\xi,t}\left[f(X_{s},s)\right]p_{b}(0,\eta;t,\xi)\mathrm{d}\eta\mathrm{d}s,
\end{align}
for every $\xi\in\mathbb{R}^{d}$.
\end{thm}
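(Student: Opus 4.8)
The plan is to mirror the argument for Theorem \ref{thm23-5}, but since the flow now occupies all of $\mathbb{R}^d$ there is no boundary, so the exit time $T_\xi$ and every indicator function drop out and the computation streamlines considerably. Throughout I write $b$ for the divergence-free vector field (the velocity $u$ of the statement) and identify the source term $f$ appearing in the conclusion with the $g$ of the equation. Fix $T>0$ and $\xi\in\mathbb{R}^d$, set $b^T(x,t)=b(x,(T-t)^+)$, and let $Y$ be the weak solution of the time-reversed It\^o equation
\[
\mathrm{d}Y_t=-b(Y_t,T-t)\,\mathrm{d}t+\sqrt{2\nu}\,\mathrm{d}B_t,\qquad Y_0=\xi,
\]
whose infinitesimal generator is $L_{-b^T}$. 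Put $Z_t=\varPsi(Y_t,T-t)$ for $0\le t\le T$.

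First I would apply It\^o's formula to $Z_t$: the second-order term contributes $\nu\Delta\varPsi$, the drift of $Y$ contributes $-b\cdot\nabla\varPsi$, and the explicit time dependence contributes $-\partial_t\varPsi$, so that
\begin{align*}
\mathrm{d}Z_t&=\sqrt{2\nu}\,\nabla\varPsi(Y_t,T-t)\cdot\mathrm{d}B_t\\
&\quad+\left(\nu\Delta-b\cdot\nabla-\tfrac{\partial}{\partial t}\right)\varPsi(Y_t,T-t)\,\mathrm{d}t.
\end{align*}
Because $\varPsi$ solves (\ref{par-01-1-1}), the bracketed drift equals $-g(Y_t,T-t)$. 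Integrating over $[0,T]$, using $Z_0=\varPsi(\xi,T)$ and $Z_T=\varPsi(Y_T,0)$, and taking expectations—the stochastic integral being a genuine martingale under the standing boundedness and smoothness assumptions on $\varPsi$—yields
\[
\varPsi(\xi,T)=\mathbb{E}\left[\varPsi(Y_T,0)\right]+\mathbb{E}\left[\int_0^T g(Y_t,T-t)\,\mathrm{d}t\right].
\]

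It then remains to recast the two terms. For the first, I would expand $\mathbb{E}[\varPsi(Y_T,0)]$ against the transition density $p_{-b^T}(0,\xi;T,\eta)$ and invoke (\ref{div-b-01}), which gives $p_{-b^T}(0,\xi;T,\eta)=p_b(0,\eta;T,\xi)$ precisely because $\nabla\cdot b=0$; this produces the first integral in the conclusion. For the second, I would disintegrate against the endpoint $Y_T=\eta$ to introduce the conditional law $\mathbb{P}_{-b^T}^{\xi,0\rightarrow\eta,T}$, then apply the duality Lemma \ref{lem31} in the form $\mathbb{P}_{-b^T}^{\xi,0\rightarrow\eta,T}=\mathbb{P}_b^{\eta,0\rightarrow\xi,T}\circ\tau_T$. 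Under $\tau_T$ the integrand $g(\psi(t),T-t)$ becomes $g(\psi(T-t),T-t)$, and the substitution $s=T-t$ in the outer integral turns this into $\mathbb{P}_b^{\eta,0\rightarrow\xi,T}[g(\psi(s),s)]$ weighted by $p_b(0,\eta;T,\xi)$, which is exactly the second integral. Renaming $T$ as $t$ completes the proof.

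I expect the only delicate point to be the bookkeeping of the time reversal in the second term: one must apply Lemma \ref{lem31} with the roles of $\xi$ and $\eta$ interchanged and then carefully track how the time argument of $g$ transforms under $\tau_T$ together with the change of variable $s=T-t$. Everything else is routine, and in particular the absence of the domain $\mathbb{R}_+^d$ removes the exit time and the indicators $1_{\{t<T_\xi\}}$ and $1_{\{T-t<\zeta\}}$ that complicate the wall-bounded case of Theorem \ref{thm23-5}.
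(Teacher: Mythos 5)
Your proof is correct and is precisely the ``similar but much simpler'' argument the paper alludes to without writing out: the It\^o computation for $Z_t=\varPsi(Y_t,T-t)$ along the time-reversed diffusion, followed by the divergence-free density identity $p_{-b^T}(0,\xi;T,\eta)=p_b(0,\eta;T,\xi)$ and the duality Lemma \ref{lem31} applied with the roles of $\xi$ and $\eta$ interchanged, exactly reproduces the proof of Theorem \ref{thm23-5} with the exit time $T_\xi$ and all indicators stripped out. Your identification of the statement's $f$ with the equation's $g$ is also the correct reading of what is evidently a notational slip in the theorem as printed.
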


\section*{Data Availability Statement}

The data that support the findings of this study are available from
the corresponding author upon reasonable request.

\section*{Declaration of Interests}

The authors report no conflict of interest.



\end{document}